\documentclass[letterpaper,twocolumn,10pt]{article}
\usepackage{usenix}

\usepackage{times}
\usepackage{amsmath,amssymb}
\usepackage{amsthm}
\usepackage{booktabs}
\usepackage{algorithm}
\usepackage{algpseudocode}
\usepackage{graphicx}
\usepackage{xcolor}
\usepackage{tikz}
\usepackage{pgfplots}
\usepackage{url}
\usepackage{hyperref}
\usepackage{paralist}
\usepackage{balance}
\usetikzlibrary{arrows.meta,positioning,fit,backgrounds,shapes.geometric,calc}
\usepgfplotslibrary{groupplots}
\pgfplotsset{compat=1.18}

\definecolor{plotnavy}{HTML}{17365D}
\definecolor{plotblue}{HTML}{3B6FB6}
\definecolor{plotteal}{HTML}{148F8B}
\definecolor{plotamber}{HTML}{D18B22}
\definecolor{plotcoral}{HTML}{D95F59}
\definecolor{plotgray}{HTML}{687481}
\definecolor{plotgrid}{HTML}{E8EEF5}

\newcommand{\scheme}{\textsc{DHMark}}
\newcommand{\payload}{\mathsf{m}}
\newcommand{\ctx}{\mathsf{ctx}}
\newcommand{\pk}{\mathsf{pk}}
\newcommand{\sk}{\mathsf{sk}}

\newcommand{\Sign}{\mathsf{Sign}}
\newcommand{\Verify}{\mathsf{Verify}}
\newcommand{\Reg}{\mathsf{Reg}}

\newcommand{\Setup}{\mathsf{Setup}}

\newcommand{\Label}{\mathsf{Label}}
\newcommand{\PKCheck}{\mathsf{Check}}
\newcommand{\PKSF}{\mathsf{PKSF}}
\newcommand{\Adv}{\mathcal{A}}

\newtheorem{lemma}{Lemma}
\newtheorem{theorem}{Theorem}

\begin{document}

\date{}

\title{\Large \bf \scheme: Public-Key Watermarking for LLM-Generated Text \\ via Diffie--Hellman-Guided Rejection Sampling}

\author{
	{\rm Haocheng Fu\textsuperscript{1,2} \quad Yuqi Qian\textsuperscript{1,2} \quad Luyao Wang\textsuperscript{1,2} \quad Yun Cao\textsuperscript{1,2}} \\ \\
	\textsuperscript{1} Institute of Information Engineering, Chinese Academy of Sciences, Beijing, China \\
	\textsuperscript{2} School of Cyber Security, University of Chinese Academy of Sciences, Beijing, China
} 

\maketitle

\begin{abstract}
	Large language model (LLM) watermarking provides an important mechanism
	for tracing the provenance of generated text. Existing statistical
	watermarks are often effective and robust, but most of them rely on private
	detection keys, which centralizes verification and complicates public
	auditing. Recent public or publicly verifiable watermarking schemes improve
	key management, yet many of them rely on exact recovery of embedded
	cryptographic strings, making them fragile under token edits, truncation,
	copy-paste, and low-entropy generation. This paper introduces \scheme{}, a
	public-key watermarking framework for LLM-generated text. The key idea is to
	separate payload authorization from noisy textual evidence. An issuer signs
	a short registry payload bound to a public context, and the payload is
	expanded into many one-bit equations. During generation, a
	Diffie--Hellman-guided token-labeling interface assigns each candidate token
	a public equation vote, and the sampler softly or selectively promotes
	candidates whose votes agree with the authorized payload. During
	verification, third-party verifiers use public information to extract token
	votes, aggregate them into equation-level evidence, and score only signed
	registry records. This design avoids exact recovery of a long embedded
	signature and instead treats watermark detection as registry-aided
	statistical evidence aggregation. We formalize the public-verification
	setting, analyze label pseudorandomness, registry-backed soundness, and
	sampling distortion, and evaluate a prototype under truncation,
	substitution, copy-paste, wrong-context, and plain-generation attacks. In
	the default 32-bit configuration, \scheme{} maintains at least a $0.967$
	valid rate across eight edit conditions while yielding a $0.000$ acceptance
	rate on three negative controls.
\end{abstract}
	
\section{Introduction}

Large language models (LLMs) are rapidly becoming content infrastructure. Text generated by a model inside a closed platform can later be copied into social media posts, student submissions, legal drafts, news articles, code reviews, or even new training corpora. Once such text leaves the platform, its provenance is difficult to establish. A downstream verifier may need to determine whether a document contains text generated by a particular model service, but may not be willing or able to query the provider's private detection API, reveal the document being checked, or rely on a centralized decision. In many realistic settings, including content moderation, academic integrity, journalism, auditing, and dataset curation, provenance verification should therefore be possible offline, by third parties, using public information.

Watermarking is a natural mechanism for this purpose because it modifies the generation process itself rather than training an external classifier after the fact. In LLM text watermarking, the generator slightly changes the next-token sampling procedure so that the resulting token sequence carries a detectable signal. Unlike image or audio watermarking, however, the carrier is not a fixed high-dimensional signal with relatively stable redundancy. It is an autoregressive distribution whose entropy varies sharply across positions. Some decoding steps provide many plausible candidate tokens, while others are nearly deterministic. A practical LLM watermark must therefore balance several coupled requirements, including detectability, text quality, robustness to edits, payload capacity, and key management~\cite{liu2024survey}.

A first major line of work studies private-key statistical watermarks. The green-list watermark of Kirchenbauer et al.~\cite{kirchenbauer2023watermark} partitions the vocabulary into preferred and non-preferred token sets using a secret key and softly promotes preferred tokens during decoding. Detection is formulated as a statistical hypothesis test: the detector counts how often observed tokens fall into the preferred set and rejects the null hypothesis of ordinary generation when the bias is sufficiently strong. Subsequent reliability studies analyze robustness under paraphrasing, human rewriting, document mixing, span detection, and low-entropy prompts~\cite{kirchenbauer2024reliability}. This line of work established an important principle: a watermark detector does not necessarily need to recover a hidden message; it only needs enough token-level evidence to distinguish watermarked generation from ordinary generation.

A second line of work improves the quality, robustness, and statistical optimality of keyed watermarking. Distortion-free and low-distortion methods modify the sampler more carefully, for example through inverse-transform sampling, exponential-min sampling, Gumbel-style sampling, unbiased reweighting, or statistically optimized decision rules~\cite{kuditipudi2024robust,hu2024unbiased,christ2024undetectable,li2024statistical}. Other schemes further explore the tradeoff among detectability, robustness, and generation quality~\cite{giboulot2024watermax,watme2024}, while semantic or context-aware labeling targets robustness to paraphrasing and semantic drift~\cite{liu2024sir,ren2024semamark,guo2024context}. At the system level, SynthID-Text demonstrates that sampling-time watermarking can be deployed at very large scale with low latency overhead and without modifying model training~\cite{dathathri2024synthid}. These advances show that LLM watermarks can be practical and statistically powerful. Nevertheless, most such systems remain fundamentally private-key systems: the ability to detect the watermark depends on secret material that is closely tied to the ability to generate, simulate, or remove the watermark signal.

This private-key setting creates a key-management dilemma for public provenance. If the detection key remains secret, verification is centralized: third parties must trust the model provider or query an online detector. If the detection key is released, the same information that enables public detection may also help an adversary imitate the watermark or adapt generation to evade it. This tradeoff is acceptable for internal platform moderation, but it is less suitable for public accountability. A university, journalist, marketplace, court, or independent auditor should be able to verify watermark evidence without learning a secret embedding key and without sending the text back to the issuer.

Several recent works therefore move toward public or publicly verifiable LLM watermarks. UPV separates generation and detection through different neural networks, aiming to make detection public while preserving unforgeability~\cite{liu2024upv}. Publicly-detectable watermarking takes a more explicitly cryptographic route: it embeds a publicly verifiable signature into language-model output using rejection sampling and error correction, so that anyone can run the detection algorithm without secret information~\cite{fairoze2023publicly}. PVMark instead retains an underlying private-key detector but uses zero-knowledge proofs to make the detector's execution publicly auditable without revealing the secret key~\cite{duan2026pvmark}. These works identify the right high-level goal: watermark verification should be separable from watermark embedding authority.

However, a gap remains between cryptographic authenticity and statistical robustness. Signature-bearing public watermarks provide a clean authenticity story, but they often require reliable recovery of a long embedded string. This recovery objective is fragile under token substitutions, insertions, truncation, copy-paste, local desynchronization, and low-entropy decoding positions. In contrast, statistical watermarks naturally tolerate noisy evidence because detection aggregates many weak token-level signals, but their public-key story is weaker when detection depends on the same secret material used to generate the bias. Multi-bit and semantic watermarks, including position-allocated messages, multi-layer voting, payload-bearing, and topic- or signal-based methods, further explore capacity and robustness~\cite{yoo2024multibit,feng2025bimark,nemecek2026topic}, yet they do not fully resolve the tension among public verification, authorized payload binding, and noisy evidence aggregation.

A useful perspective comes from sampler-based steganography and distribution-preserving generation. Recent generative steganographic constructions argue that the sampling procedure is the appropriate place to reason about both security and distributional distortion: if sampling is modified carefully, one can encode information while controlling the output distribution~\cite{wang2025sparsamp}. Steganography and watermarking have opposite detectability goals. A steganographic object should look like an ordinary model sample to everyone except the receiver, whereas a watermarked object should later provide verifiable evidence to a detector. Nevertheless, both settings share the same technical pressure point: the method must be implemented as a sampler, must respect the model's next-token distribution, and should admit a security argument rather than only an empirical detector.

This paper asks the following question:

\begin{quote}
	\emph{Can we design an LLM watermark whose embedding authority is secret, whose verification is public, and whose robustness comes from noisy statistical evidence rather than exact recovery of a long embedded string?}
\end{quote}

We answer this question with \scheme{}, a public-key watermarking construction for LLM-generated text. \scheme{} is built around a public-key token-labeling interface guided by Diffie--Hellman-style cryptographic structure. During generation, the watermarking sampler assigns each candidate token an equation vote and biases the next-token distribution toward candidates whose votes agree with an authorized payload. During verification, third parties use public information to extract token-level votes and validate their statistical agreement with signed registry records. This replaces the shared keyed hash used in many private-key watermarks with a public-key-compatible voting interface, making explicit the separation between payload authorization and public verification.

The second key design choice is to change the role of the embedded payload. A direct public-key approach would embed a full digital signature or long authentication string into the generated text and require the verifier to recover it exactly. While cryptographically clean, this creates a long fragile bitstream. \scheme{} instead uses a short signed registry record. The issuer signs a compact payload bound to the public context, and the payload is expanded into many one-bit fountain-style equations. Each generated token contributes a noisy vote for one equation. Verification aggregates these votes and scores only signed registry records against the observed evidence. Thus the verifier does not ask whether the text decodes a hidden message without error; it asks whether the text contains sufficient independent evidence for an issuer-authorized payload.

This formulation deliberately treats LLM generation as a noisy channel. At high-entropy positions, the sampler can introduce stronger watermark bias because many plausible candidates are available. At low-entropy positions, forcing compatibility may harm text quality, so the sampler can use a softer promotion rule or fall back to ordinary generation. Edits, substitutions, truncation, and local desynchronization further remove or corrupt token votes. \scheme{} absorbs these effects through redundant one-bit equations, local transcript labeling, majority voting, and registry-aided scoring. Robustness therefore comes from repeated local evidence rather than from exact recovery of a contiguous payload.

Public verification also requires a careful statement of what is being verified. \scheme{} is not a document-integrity proof. If an adversary copies a sufficiently long watermarked span into another document, the copied span may still provide valid evidence. The correct claim is that the text contains issuer-authorized watermark evidence under a specified public context, not that every surrounding token is authentic or untampered. This distinction is central to our threat model and evaluation: copy-paste and malicious-suffix attacks are treated as provenance ambiguity rather than ordinary false positives on plain generation.

We implement \scheme{} on open-weight LLMs and evaluate it under token-preserving edits, truncation, substitution, copy-paste, wrong-context detection, and plain-generation false-positive tests. Our evaluation follows the joint quality--detectability--robustness perspective advocated by recent watermark benchmarks~\cite{piet2024markmywords,tu2024waterbench,liang2025waterpark}, but is organized around the proposed public-verification interface rather than only AI-text classification accuracy. In addition to detection validity, we report accepted equations, equation rank, unique equation coverage, registry agreement, fallback rate, candidate entropy, and sampling overhead. These measurements reveal when detection succeeds because enough independent evidence remains, and when it fails because generation is too low-entropy, too repetitive, or too heavily edited.

We introduce \scheme{}, a public-key LLM watermarking construction based on Diffie--Hellman-guided token labeling, which separates secret embedding authority from public watermark verification. Our contributions are as follows.
\begin{compactitem}
	\item \textbf{Registry-aided public-verification formulation.} We formalize public watermark verification as a test for noisy evidence supporting a \emph{signed, context-bound registry payload}. This separates payload-authorization authority from the public ability to assess textual evidence, and replaces brittle exact recovery of a long signature or message string with a bounded set of issuer-authorized hypotheses.
	\item \textbf{DH-guided redundant-evidence construction.} We construct a sampler that maps locally contextualized token candidates to public-key-compatible equation votes, expands a short payload into one-bit fountain-style equations, and aggregates repeated votes by majority and rank-aware registry scoring. The design operates on deployed top-$k$/top-$p$ distributions and uses soft promotion and fallback sampling when compatible mass is inadequate, so robustness comes from redundant local evidence rather than a contiguous encoded bitstream.
	\item \textbf{Explicit conditional security analysis.} We state the exact guarantee targeted by unpredictable labels, concentration of noisy votes, bounded sampling distortion, and registry-backed soundness for unsigned or unwatermarked text, and distinguish it from full document integrity or resistance to a detector-aware optimizer. We give direct proofs for the three supporting lemmas and defer the registry-backed soundness proof to the appendix.
	\item \textbf{Public-verification evaluation under transformations.} On 30 prompts per configuration, the default 32-bit system accepts all identity texts, retains valid rates from $0.967$ to $1.000$ across eight editing and reuse attacks, and rejects all evaluated plain and wrong-context controls. Our diagnostic measurements further expose the evidence coverage, fallback behavior, and sampling cost behind each decision.
\end{compactitem}

\section{Preliminaries and Problem Formulation}
\label{sec:preliminaries}

This section introduces the notation and design requirements used throughout the paper. We first formalize autoregressive LLM decoding and token-level watermarking. We then discuss the distinction between private detection, public detection, and public verifiability. Finally, we introduce the cryptographic and coding primitives needed by \scheme{} and state the problem addressed in this work.

\subsection{Autoregressive LLM Decoding}
\label{subsec:llm_decoding}

Let $M$ be an autoregressive language model with vocabulary $V$. Given a prompt $p$ and a generated prefix $x_{<t}=(x_1,\ldots,x_{t-1})$, the model defines a next-token distribution
$$P_t(x)=\Pr_M[X_t=x\mid p,x_{<t}], \qquad x\in V.$$

In principle, a token can be sampled directly from $P_t$. In practical LLM deployments, however, the raw distribution is usually modified by temperature scaling and then truncated by decoding filters such as top-$k$, nucleus sampling, or their variants. We denote the resulting candidate set at position $t$ by
$$C_t \subseteq V.$$

The deployed sampling distribution is the normalized distribution over this candidate set:
$$
\widetilde P_t(x)=
\frac{P_t(x)\mathbf{1}[x\in C_t]}
{\sum_{x'\in C_t}P_t(x')}.
$$

This distinction is important for watermarking. A watermarking algorithm that forces low-probability tokens from the full vocabulary may produce unnatural text, especially at low-entropy positions. Therefore, in this paper we treat $\widetilde P_t$, rather than the untruncated $P_t$, as the operational distribution that the watermarking sampler should respect.

The entropy and probability mass of $C_t$ can vary significantly across positions. In open-ended or semantically flexible contexts, $C_t$ may contain many plausible tokens with relatively balanced probabilities. In deterministic contexts, however, a few tokens may dominate most of the probability mass. This variation creates a central challenge for LLM watermarking: the watermark should be strong enough to support reliable detection, but not so strong that it overrides the model's natural distribution.

\subsection{Token-Level Text Watermarking}
\label{subsec:token_watermarking}

A token-level LLM watermark modifies the generation process so that the resulting token sequence contains a detectable signal. A generic watermarking scheme consists of a generation algorithm and a detection algorithm. The generation algorithm produces a text sequence
$$
\mathsf{WGen}(M,p,\cdot)\rightarrow X,
$$
where $X=(x_1,\ldots,x_T)$. The detection algorithm outputs a binary decision
$$
\mathsf{Detect}(X,\cdot)\rightarrow \{0,1\}.
$$

A common design is statistical watermarking. At each generation step, the vocabulary or candidate set is partitioned into preferred and non-preferred tokens using a keyed pseudorandom function. The sampler then slightly increases the probability of preferred tokens. Detection counts whether the observed tokens fall into the preferred set more often than expected under ordinary generation. The green-list watermark of Kirchenbauer et al.~\cite{kirchenbauer2023watermark} is a representative example of this paradigm, and subsequent studies analyze its robustness under paraphrasing, document mixing, span detection, and low-entropy prompts~\cite{kirchenbauer2024reliability}.

Another line of work aims to reduce distributional distortion. Distortion-free, unbiased, or low-distortion watermarks modify the sampler more carefully so that the generated distribution remains close to the original model distribution while still producing a detectable keyed signal~\cite{kuditipudi2024robust,hu2024unbiased,christ2024undetectable,giboulot2024watermax}. These methods are especially relevant to practical deployment because watermarks should not noticeably degrade generation quality.

A third class of schemes embeds multiple bits or structured payloads into generated text. Such payloads may encode user identifiers, metadata, or cryptographic material. However, exact message recovery is difficult in the text domain. Token substitutions, insertions, deletions, paraphrasing, and truncation can easily corrupt a long bitstream. This motivates a different view adopted in \textsc{DHMARK}: the text is treated as a noisy source of token-level evidence, and verification is performed by aggregating many weak votes rather than by exactly recovering a long hidden message.

\subsection{Private Detection, Public Detection, and Public Verifiability}
\label{subsec:public_verification}

The key-management model is a fundamental part of an LLM watermarking scheme. In a private-key watermark, the detector uses a secret key to test whether a text is watermarked. This setting is simple and powerful, but it centralizes verification. A third party must either obtain the secret key or query the issuer's private detector. The former may enable forgery or adaptive removal, while the latter requires trust in an online service and may expose the text being checked.

Public detection aims to let anyone run the detection algorithm using public information. However, public detection alone is not sufficient. If the public detection rule also enables an adversary to construct watermarked text, then detection no longer proves that the text was generated by an authorized issuer. Thus, a stronger goal is public verifiability: third parties should be able to verify watermark evidence, while the authority to create valid watermark evidence should remain controlled.

Recent public or publicly verifiable watermarking methods address this problem in different ways. UPV separates generation and detection through different neural networks and targets unforgeable public verification~\cite{liu2024upv}. Publicly-detectable watermarking embeds publicly verifiable cryptographic information into language-model outputs~\cite{fairoze2023publicly}. PVMark uses zero-knowledge proofs to make the execution of a private-key detector publicly auditable without revealing the detection key~\cite{duan2026pvmark}. More generally, verifiable random functions show how a secret evaluation capability can be paired with publicly checkable outputs~\cite{micali1999vrf}. These works show that public verification is a meaningful and necessary goal, but they also expose a tension between cryptographic authenticity and robustness to noisy text transformations.

In this work, we adopt a registry-backed public-verification model. The issuer authorizes a short payload by signing a registry record. The generated text does not need to contain an exactly recoverable signature. Instead, it should contain enough token-level evidence that is statistically consistent with one signed registry payload. Therefore, the claim made by detection is deliberately limited: \emph{the text contains issuer-authorized watermark evidence}, rather than \emph{the entire document is an untampered issuer-generated object}. 

This distinction is essential for copy-paste and malicious-suffix scenarios, where a valid watermarked span may appear inside a larger untrusted document.

\subsection{Cryptographic Ingredients}
\label{subsec:crypto_ingredients}

\scheme{} uses two cryptographic ingredients: digital signatures and a public-key token-labeling interface.

First, digital signatures are used to authenticate registry records. Let $(\mathsf{sk}_{\mathsf{sig}},\mathsf{pk}_{\mathsf{sig}})$ be a signing key pair. For a payload $m$ and public context $ctx$, the issuer creates a signed record
$$
R=(m,\ell,H(ctx),\mathsf{fp}(\mathsf{pk}_{\mathsf{sig}}),
\mathsf{fp}(\mathsf{pk}_{\mathsf{samp}}),\sigma).
$$
The signature is computed as
$$
\sigma \leftarrow
\mathsf{Sign}_{\mathsf{sk}_{\mathsf{sig}}}
(m,\ell,H(ctx),\mathsf{fp}(\mathsf{pk}_{\mathsf{sig}}),
\mathsf{fp}(\mathsf{pk}_{\mathsf{samp}})).
$$
The signature prevents an adversary from introducing arbitrary payload claims after observing a text.

Second, the sampler assigns each token candidate a cryptographic vote. For a token candidate $x$ at position $t$, we define a local transcript
$$
c_t=(x_{t-h},\ldots,x_{t-1}),
$$
where $h$ is a small context window. The token item to be labeled is
$$
\tau_t(x)=(ctx,c_t,x).
$$
A token-labeling function maps this item to an equation index and a bit:
$$
(i_x,b_x)=\mathsf{Label}(ctx,c_t,x), \qquad i_x\in [N],\quad b_x\in\{0,1\}.
$$

In \scheme{}, this label is guided by public-key cryptographic structure. At an abstract level, the generation algorithm uses sampling authority to bias token selection toward labels compatible with an authorized payload, while the verifier uses public information to extract or validate token votes.

This public-key labeling interface must be interpreted carefully. If token labels are fully public and efficiently computable by anyone, then public verification is simple, but an adversary may also use the same labels to simulate watermark-compatible generation for an already authorized payload. In that case, unforgeability should be understood as registry-backed soundness: the adversary cannot create evidence for an unsigned payload or wrong context, except through statistical false positives. Stronger claims that only the issuer can ever generate accepting texts require additional cryptographic mechanisms, such as publicly verifiable trapdoor relations, proof-carrying labels, zero-knowledge detector proofs, or system-level access control. The present paper focuses on the registry-backed public-verification setting and explicitly states this security boundary.

\subsection{One-Bit Equations and Noisy Evidence Aggregation}
\label{subsec:fountain_equations}

To avoid exact recovery of a long embedded message, \scheme{} expands a short payload into many one-bit equations. Let $m=(m_1,\ldots,m_\ell)\in\{0,1\}^{\ell}$ be the payload authorized by the issuer. A public binary matrix
$$
G\in\{0,1\}^{N\times \ell}
$$
is derived from the public context $ctx$. The equation vector is
$$
Y(m)=Gm \bmod 2,
$$
where
$$
Y(m)=(y_1(m),\ldots,y_N(m)).
$$
Equivalently, each equation index $i\in[N]$ corresponds to a support set
$$
S_i\subseteq[\ell],
$$
and the target equation bit is
$$
y_i(m)=\bigoplus_{j\in S_i}m_j.
$$
During generation, each emitted token contributes one vote
$$
x_t \mapsto (i_t,b_t).
$$
Ideally, a watermarked token satisfies
$$
b_t=y_{i_t}(m).
$$
In practice, the vote may be noisy because of soft sampling, fallback generation, token substitutions, deletions, or local desynchronization. The verifier therefore aggregates votes by equation index. For each $i\in[N]$, define the vote multiset
$$
V_i(X)=\{b_t: i_t=i,\ 1\leq t\leq T\}.
$$
If $|V_i(X)|$ is large enough, the verifier estimates the equation bit by majority voting:
$$
\widehat y_i(X)=\mathsf{Majority}(V_i(X)).
$$
Let
$$
A(X)=\{i: |V_i(X)|\geq v_{\min}\}
$$
be the set of retained equations. For a signed registry payload $m_R$, the agreement score is
$$
\mathsf{score}(X,m_R)=
\frac{1}{|A(X)|}
\sum_{i\in A(X)}
\mathbf{1}[\widehat y_i(X)=y_i(m_R)].
$$

An ordinary unwatermarked text should agree with any fixed payload at a rate close to $1/2$, while a watermarked text should have a score noticeably above $1/2$. Detection is therefore formulated as noisy evidence aggregation against signed registry records.

\subsection{Problem Statement and Design Goals}
\label{subsec:problem_statement}

We consider a public-verification setting with four roles: an issuer, a generator, a registry, and a verifier. The issuer authorizes payloads by signing registry records. The generator runs the LLM and applies the watermarking sampler. The registry stores signed payload records. The verifier receives a candidate text, public keys, the public context, and registry records, and then decides whether the text contains sufficient evidence for an authorized payload.

A public watermarking scheme in this setting consists of four algorithms:
$$
\Pi=(\mathsf{KeyGen},\mathsf{Authorize},\mathsf{WGen},\mathsf{Detect}).
$$
The key-generation algorithm outputs signing and sampling keys:
$$
\mathsf{KeyGen}(1^\lambda)\rightarrow
(\mathsf{sk}_{\mathsf{sig}},\mathsf{pk}_{\mathsf{sig}},
\mathsf{sk}_{\mathsf{samp}},\mathsf{pk}_{\mathsf{samp}}).
$$
The authorization algorithm creates a signed registry record:
$$
\mathsf{Authorize}(\mathsf{sk}_{\mathsf{sig}},m,ctx)\rightarrow R.
$$
The watermarked generation algorithm produces a token sequence:
$$
\mathsf{WGen}(M,p,ctx,R,\mathsf{sk}_{\mathsf{samp}})\rightarrow X.
$$
The detection algorithm uses only public information and registry records:
$$
\mathsf{Detect}(X,ctx,\mathsf{Reg};
\mathsf{pk}_{\mathsf{sig}},\mathsf{pk}_{\mathsf{samp}})
\rightarrow \{0,1\}.
$$
The design goals are as follows.

\textbf{Public verification.}
The detector should operate offline using public keys, public context, and signed registry records. It should not require access to the issuer's private detector or sampling secret.

\textbf{Registry-backed authenticity.}
A text should be accepted only if its token-level evidence is consistent with a valid signed registry payload under the correct public context. Detection should reject wrong-context texts, unsigned payload claims, and ordinary plain generations except with small false-positive probability.

\textbf{Robust noisy evidence.}
The verifier should not rely on exact recovery of a long hidden bitstream. Instead, it should aggregate many local token votes and tolerate noise caused by fallback sampling, truncation, substitution, deletion, and copy-paste.

\textbf{Quality-preserving sampling.}
The watermarking sampler should respect the deployed decoding distribution $\widetilde P_t$. It should preferably operate within the top-$k$ or top-$p$ candidate set and avoid forcing low-probability tokens that would degrade fluency or semantic quality.

\textbf{Bounded verification semantics.}
Detection should be interpreted as evidence that the text contains issuer-authorized watermark signal. It should not be interpreted as a proof of full-document integrity or as a guarantee that every token was generated by the issuer.

Based on these definitions and goals, the next section presents \scheme{}, a public-key watermarking construction that combines token-level public-key votes, soft or hybrid sampling-time promotion, one-bit equation expansion, and registry-aided evidence scoring.

\section{\scheme{} Method}
\label{sec:method}

This section gives the full construction of \scheme{}. We begin by formalizing
the public-verification setting, since it is the point at which our goal
departs from both private-key statistical watermarking and classical
steganographic communication. We then present the public-key sampling function,
the registry-backed payload layer, the one-bit equation encoding, the embedding
and verification algorithms, and the security properties that follow from these
components.

\paragraph{Roadmap.}
Section~\ref{sec:method:setting} defines the public verification experiment and
the notation used throughout the construction. Section~\ref{sec:method:overview}
gives the construction overview. Section~\ref{sec:method:pksf} introduces the
public-key sampling function and its Diffie--Hellman instantiation.
Section~\ref{sec:method:eqs} expands each payload into redundant
one-bit equations for multi-position voting. Section~\ref{sec:method:registry}
explains how registry signatures authenticate the watermark evidence.
Section~\ref{sec:method:analysis} summarizes the resulting security claims,
design tradeoffs, and limits.

\subsection{Watermarking and Public Verification Setting}
\label{sec:method:setting}

\paragraph{Language-model channel.}
Let \(\mathcal{M}\) be an autoregressive language model with vocabulary
\(\mathcal{V}\). Given a prompt \(p\) and generated prefix
\(x_{<t}=(x_1,\ldots,x_{t-1})\), the model induces a next-token distribution
\[
    P_t(x)=\Pr_{\mathcal{M}}[X_t=x\mid p,x_{<t}],
    \qquad x\in\mathcal{V}.
\]
Modern LLM deployments rarely sample from \(P_t\) directly. They first apply a
decoding filter, such as top-\(k\), top-\(p\), temperature scaling, or a
combination of them. We model this filter by a candidate set
\(C_t\subseteq\mathcal{V}\) and write the normalized deployment distribution as
\[
    \widetilde{P}_t(x)
    =
    \frac{P_t(x)\mathbf{1}[x\in C_t]}
         {\sum_{x'\in C_t}P_t(x')}.
\]
The design requirement is that watermarking should modify this deployed
sampler, not force arbitrary low-probability tokens from the full vocabulary.
This requirement is essential in practice because the probability mass and
entropy of \(C_t\) vary sharply across positions.

\paragraph{Watermark as token-level evidence.}
In \scheme{}, a watermark is not a literal message appended to the text and not
a separate metadata field. It is a cryptographically structured bias introduced
during token sampling. The issuer authorizes a short payload
\(\payload\in\{0,1\}^{\ell}\). This payload determines a vector of target
watermark bits
\[
    Y(\payload)=(y_1(\payload),\ldots,y_N(\payload)),
    \qquad y_i(\payload)\in\{0,1\}.
\]
At each generation step, every candidate token \(x\in C_t\) is assigned a
publicly checkable watermark label
\[
    L(\tau_t(x))=(i_x,b_x),
    \qquad i_x\in[N],\ b_x\in\{0,1\}.
\]
The generator prefers tokens whose label agrees with the authorized payload:
\[
    b_x = y_{i_x}(\payload).
\]
Therefore each emitted token acts as a noisy vote for one equation bit of the
payload. A watermarked text is a sequence whose token labels agree with one
authorized payload noticeably more often than ordinary sampling would produce.
This is the central watermark signal.

This viewpoint also clarifies the difference between watermarking and exact
message extraction. The detector does not need to recover \(\payload\) exactly
from the text. Instead, it evaluates whether the collected token votes provide
statistically sufficient evidence for some signed payload in the registry. If
\(\widehat{y}_i(X)\) denotes the observed majority bit for equation \(i\), then
the core evidence statistic has the form
\[
    \mathsf{score}(X,\payload)
    =
    \frac{1}{|A(X)|}
    \sum_{i\in A(X)}
    \mathbf{1}[\widehat{y}_i(X)=y_i(\payload)],
\]
where \(A(X)\subseteq[N]\) is the set of equation indices for which the text
contains enough votes. Ordinary unwatermarked text should satisfy
\(\mathbb{E}[\mathsf{score}(X,\payload)]\approx 1/2\) for any fixed authorized
payload, while watermarked text should have score significantly above \(1/2\).
The watermark detection problem is thus a public hypothesis test over token
labels:
\[
\begin{aligned}
    H_0 &: X \text{ was sampled without payload-dependent bias},\\
    H_1 &: X \text{ contains token-level evidence for an authorized } \payload.
\end{aligned}
\]

\paragraph{Local public transcript.}
The public context is denoted by \(\ctx\). It binds the watermark instance to
the prompt, model identifier, sampling parameters, public-key fingerprint, and
application metadata. At token position \(t\), \scheme{} also uses a bounded
local transcript
\[
    c_t=(x_{t-h},\ldots,x_{t-1}),
\]
where \(h\) is a small context length and missing prefix elements are replaced
by a fixed domain-separation symbol. The token item labeled by the cryptographic
sampler is therefore
\[
    \tau_t(x)=(\ctx,c_t,x),
\]
rather than an absolute-position item \((\ctx,t,x)\). This choice is not merely
an implementation detail. If an adversary deletes or inserts a short span, an
absolute-position watermark can remain desynchronized for the rest of the
document. In contrast, a local-context watermark loses at most the edited span
and the following \(h\) positions before the transcript seen by the verifier
realigns with the transcript used by the generator.

\paragraph{Parties and interfaces.}
The public-verification setting contains four roles.
The \emph{issuer} authorizes payloads and holds a signing key
\(\sk_{\mathsf{sig}}\). The \emph{generator} runs the LLM and holds the secret
sampling key \(\sk_{\mathsf{samp}}\). The \emph{registry} stores signed payload
records. The \emph{verifier} receives a text \(X=(x_1,\ldots,x_T)\), public
parameters, and the registry, and decides whether \(X\) contains sufficient
watermark evidence for an authorized payload. The verifier is offline in the
sense that it does not need the generator's sampling secret or an online oracle
owned by the issuer; it only needs public keys and registry records.

Formally, a public watermarking scheme in this setting has algorithms
\[
    \Pi=(\mathsf{KeyGen},\mathsf{Authorize},
    \mathsf{WGen},\mathsf{Detect}).
\]
\(\mathsf{KeyGen}(1^\lambda)\) outputs signing keys
\((\sk_{\mathsf{sig}},\pk_{\mathsf{sig}})\) and sampling keys
\((\sk_{\mathsf{samp}},\pk_{\mathsf{samp}})\). \(\mathsf{Authorize}\) creates a
registry record \(R\) for a payload \(\payload\) and context \(\ctx\).
\(\mathsf{WGen}\) uses \(\mathcal{M}\), \(\sk_{\mathsf{samp}}\), \(R\), and
\(\ctx\) to generate watermarked text. \(\mathsf{Detect}\) uses only
\((\pk_{\mathsf{sig}},\pk_{\mathsf{samp}})\), \(\ctx\), the registry, and the
candidate text to output
\[
    \mathsf{Detect}(X,\ctx,\Reg;\pk_{\mathsf{sig}},\pk_{\mathsf{samp}})
    \in \{0,1\}.
\]
This separation is the central public-key property: the secret used to create
evidence is not needed to check evidence.

\begin{figure*}[t]
	\centering
	\includegraphics[width=\textwidth]{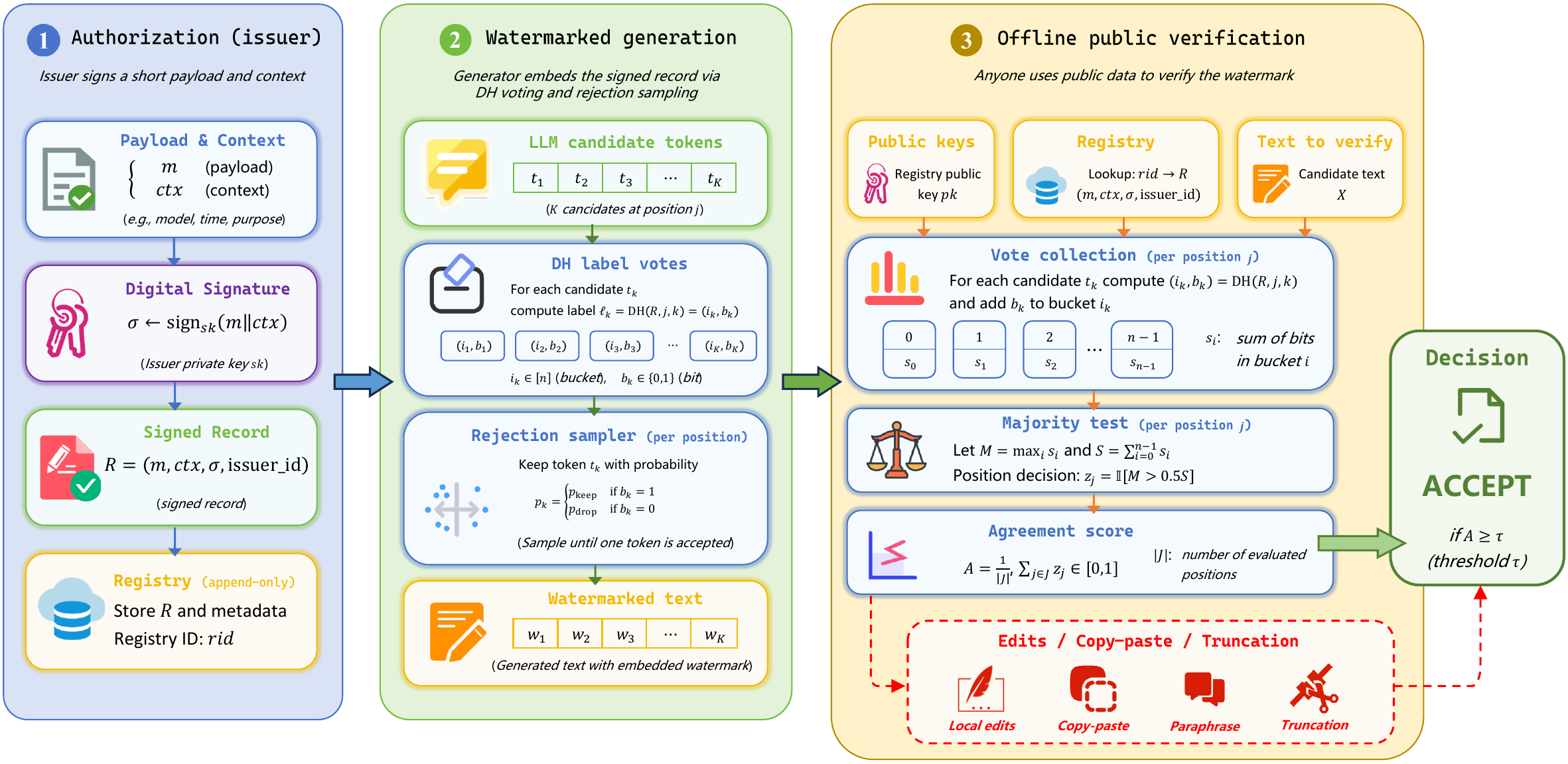}
	\caption{\scheme{} workflow. The generator uses secret sampling material to
		create token-level votes for a signed payload, while the verifier uses public
		keys and registry records to aggregate evidence and authenticate the watermark.}
	\label{fig:dhmark-overview}
\end{figure*}

\paragraph{Adversarial experiment.}
The adversary \(\Adv\) knows the scheme, the public keys, the public context,
and the detector. It may observe watermarked texts, generate ordinary
unwatermarked texts, and apply editing transformations
\[
    X' \leftarrow \mathcal{T}(X),
\]
including truncation, burst deletion, token substitution, copy-paste, and
suffix appending. It does not know \(\sk_{\mathsf{sig}}\) or
\(\sk_{\mathsf{samp}}\), and it cannot insert unsigned records into \(\Reg\).
The two core events are
\[
    \mathsf{FN}_{\mathcal{T}}
    =
    \Pr_{X\leftarrow \mathsf{WGen}(p,\ctx,R)}
    [\mathsf{Detect}(\mathcal{T}(X),\ctx,\Reg)=0],
\]
and
\[
    \mathsf{FP}
    =
    \Pr_{Y\leftarrow \mathsf{Gen}_{\mathcal{M}}(p)}
    [\mathsf{Detect}(Y,\ctx,\Reg)=1].
\]
Robustness asks that \(\mathsf{FN}_{\mathcal{T}}\) remain small for benign or
moderate edits \(\mathcal{T}\). Soundness asks that \(\mathsf{FP}\) be small on
ordinary unwatermarked texts and on texts generated under the wrong public
context.

\paragraph{What is being authenticated?}
The detector does not prove that every token in a document was generated by the
issuer. Instead, it proves that the text contains enough independent evidence
for an issuer-authorized payload. This distinction matters. Let \(A(X)\) be the
set of equation indices for which the verifier obtains reliable observations.
Detection is evidence aggregation:
\[
\begin{aligned}
    \exists R\in\Reg \quad \text{s.t.}\quad
    &\Verify_{\pk_{\mathsf{sig}}}(R,\ctx)=1,\\
    &\mathsf{Evidence}(X,R,\ctx)\geq \theta .
\end{aligned}
\]
Thus a valid detection means ``this text contains watermark evidence consistent
with a signed registry payload,'' not ``the full document is an untampered
signed object.'' Copy-paste attacks are therefore treated as provenance
ambiguity rather than cryptographic signature forgery.

\paragraph{Immediate consequences.}
The setting above yields three useful design constraints. First, public
verification requires a sampling label that can be checked from public data;
otherwise the verifier degenerates into a private-key detector. Second,
unforgeability should be tied to authorized payload records rather than to the
raw bit string decoded from text, because token-level extraction is noisy.
Third, robustness should come from repeated local evidence rather than from a
single long hidden message: if each token supplies a bounded, locally
self-synchronizing vote, then deletions and truncations reduce the amount of
evidence but need not corrupt the entire watermark.

\subsection{Overview}
\label{sec:method:overview}

\scheme{} has three layers, shown in Figure~\ref{fig:dhmark-overview}. The
authentication layer authorizes a short payload through a signed registry
record. The coding layer expands the payload into many one-bit equations. The
sampling layer uses a DH-guided label for each candidate token and performs
rejection sampling so that generated tokens preferentially vote for the
authorized equation bits. Verification runs the same pipeline in reverse:
extract public token votes, aggregate them into noisy equation estimates, and
score only signed registry payloads.

The generation side is shown in Algorithm~\ref{alg:embed}. It never embeds a
large contiguous packet. Instead, each token position receives a local
transcript \(c_t\), each candidate token receives a label \((i,b)\), and the
sampler conditions on candidates compatible with the authorized equation vector
\(Y(\payload)\). Low-entropy positions may have no compatible candidate inside
the deployed top-\(k\)/top-\(p\) set; in that case the sampler falls back to
ordinary sampling to preserve text quality.

The verification side is shown in Algorithm~\ref{alg:verify}. It reconstructs
public token votes, applies majority voting for each equation, and then scores
only signed registry records. This is why verification can succeed even when
exact payload decoding fails: the detector asks whether the observed evidence
is sufficiently aligned with an authorized payload, not whether every hidden
bit was recovered without error.

\begin{algorithm}[t]
	\caption{\scheme{} watermarked generation}
	\label{alg:embed}
	\small
	\begin{algorithmic}[1]
		\Require Prompt \(p\), model \(\mathcal{M}\), context \(\ctx\), registry \(\Reg\),
		sampling secret \(\sk_{\mathsf{samp}}\), signing key \(\sk_{\mathsf{sig}}\),
		length \(T\)
		\Ensure Watermarked token sequence \(X=(x_1,\ldots,x_T)\), registry record \(R\)
		\State Sample payload \(\payload\in\{0,1\}^{\ell}\)
		\State Create signed record \(R\leftarrow
		\Sign_{\sk_{\mathsf{sig}}}(\payload,\ell,H(\ctx))\) and publish it to \(\Reg\)
		\State Derive \(G\) from \(\ctx\) and compute \(Y(\payload)=G\payload\bmod 2\)
		\State \(X\leftarrow [\,]\)
		\For{\(t=1\) to \(T\)}
		\State Compute \(P_t=\mathcal{M}(\cdot\mid p,X)\) and candidate
		distribution \(\widetilde{P}_t\) over \(C_t\)
		\State Set local transcript \(c_t\leftarrow(x_{t-h},\ldots,x_{t-1})\)
		\State \(M_t\leftarrow \{x\in C_t:
		\Label_{\sk_{\mathsf{samp}}}(\ctx,c_t,x)=(i,b)
		\land b=y_i(\payload)\}\)
		\If{\(M_t\neq\emptyset\)}
		\State Sample \(x_t\sim \widetilde{P}_t(\cdot\mid M_t)\)
		\Else
		\State Sample \(x_t\sim \widetilde{P}_t\) \Comment{quality-preserving fallback}
		\EndIf
		\State Append \(x_t\) to \(X\)
		\EndFor
		\State \Return \(X,R\)
	\end{algorithmic}
\end{algorithm}

\subsection{Public-Key Sampling Function}
\label{sec:method:pksf}

The cryptographic primitive used by \scheme{} is a public-key sampling function
\[
    \PKSF = (\Setup,\Label,\PKCheck).
\]
\(\Setup(1^\lambda)\) outputs a secret key \(\sk_{\mathsf{samp}}\) and public
key \(\pk_{\mathsf{samp}}\). For a token transcript item
\(\tau=(\ctx,c,x)\), the secret labeling algorithm outputs an equation index
and bit
\[
    (i,b) \leftarrow \Label_{\sk_{\mathsf{samp}}}(\tau),
    \qquad i \in [N],\ b\in\{0,1\},
\]
where \(N\) is the number of public equations. The public checking algorithm
\[
    \PKCheck_{\pk_{\mathsf{samp}}}(\tau,i,b) \in \{0,1\}
\]
allows the verifier to recover or validate the token vote.

\begin{algorithm}[t]
	\caption{\scheme{} public verification}
	\label{alg:verify}
	\small
	\begin{algorithmic}[1]
		\Require Text \(X=(x_1,\ldots,x_T)\), context \(\ctx\), public keys
		\(\pk_{\mathsf{samp}},\pk_{\mathsf{sig}}\), registry \(\Reg\)
		\Ensure Accept or reject
		\State Derive \(G\) from \(\ctx\)
		\State Initialize vote multisets \(V_i\leftarrow \emptyset\) for all \(i\in[N]\)
		\For{\(t=1\) to \(T\)}
		\State Set local transcript \(c_t\leftarrow(x_{t-h},\ldots,x_{t-1})\)
		\For{\(i=1\) to \(N\), \(b\in\{0,1\}\)}
		\If{\(\PKCheck_{\pk_{\mathsf{samp}}}((\ctx,c_t,x_t),i,b)=1\)}
		\State \(V_i\leftarrow V_i\cup\{b\}\)
		\EndIf
		\EndFor
		\EndFor
		\State \(A\leftarrow\{i: |V_i|\geq v_{\min}\}\)
		\State \(\widehat{y}_i\leftarrow \mathsf{majority}(V_i)\) for all \(i\in A\)
		\ForAll{signed records \(R\in\Reg\)}
		\If{\(\Verify_{\pk_{\mathsf{sig}}}(R,\ctx)=0\)}
		\State \textbf{continue}
		\EndIf
		\State Compute \(Y(\payload_R)=G\payload_R\bmod 2\)
		\State Compute \(\mathsf{score}(X,\payload_R)\) over retained equations \(A\)
		\If{\(\mathsf{score}\geq\theta\) and evidence gates pass}
		\State \Return Accept with record \(R\)
		\EndIf
		\EndFor
		\State \Return Reject
	\end{algorithmic}
\end{algorithm}

\paragraph{Correctness requirement.}
For every transcript item \(\tau\), the public checker must agree with the
secret label:
\[
    \Pr\!\left[
        \PKCheck_{\pk_{\mathsf{samp}}}
        (\tau,\Label_{\sk_{\mathsf{samp}}}(\tau))=1
    \right]=1.
\]
For all other pairs \((i,b)\neq \Label_{\sk_{\mathsf{samp}}}(\tau)\), the
checker should reject except with negligible probability. This gives the
verifier the same token vote that the generator used, without revealing the
secret sampling key.

\paragraph{Sampling pseudorandomness.}
To preserve generation quality and prevent public keys from enabling trivial
watermark fabrication, labels should be computationally indistinguishable from
uniform:
\[
    \Label_{\sk_{\mathsf{samp}}}(\tau)
    \approx_c
    (I,B),
    \qquad I\leftarrow [N],\ B\leftarrow\{0,1\},
\]
for any party that does not hold \(\sk_{\mathsf{samp}}\). Under this condition,
before conditioning on the target payload, each candidate token appears to vote
for a random equation bit. The watermark signal is introduced only by the
generator's rejection-sampling preference.

\paragraph{Diffie--Hellman instantiation.}
The intended instantiation works over a cyclic group \(\mathbb{G}\) of prime
order \(q\) with generator \(g\). The sampling key is
\[
    \sk_{\mathsf{samp}}=a\leftarrow\mathbb{Z}_q,
    \qquad
    \pk_{\mathsf{samp}}=A=g^a.
\]
A hash-to-group function maps each transcript item \(\tau\) to
\[
    U_\tau=H_{\mathbb{G}}(\tau)\in\mathbb{G}.
\]
The secret label is derived from the DH value
\[
    z_\tau = U_\tau^a,
    \qquad
    (i,b)=\mathsf{Hash}_{\mathsf{lab}}(z_\tau,\tau).
\]
By the DDH assumption, the tuple
\[
    (g,A,U_\tau,U_\tau^a)
\]
is computationally indistinguishable from
\[
    (g,A,U_\tau,g^r),\qquad r\leftarrow\mathbb{Z}_q,
\]
for fresh \(\tau\). Hence the derived label is pseudorandom to any party that
cannot compute the DH value.

For public verification, \(\PKCheck\) must be realized by a DH-compatible public
checking mechanism. One abstract way to write the required interface is
\[
    \PKCheck_{\pk_{\mathsf{samp}}}(\tau,i,b)=1
    \iff
    (i,b)=\mathsf{Hash}_{\mathsf{lab}}(U_\tau^a,\tau),
\]
while allowing the verifier to test this relation using only
\(\pk_{\mathsf{samp}}\) and public transcript data. This can be instantiated by
a publicly verifiable trapdoor-PRF interface, a pairing-based relation, or a
compact proof that the vote was derived consistently from the DH label. The
watermarking construction below only assumes this \(\PKSF\) interface; its
statistical encoding and authentication layers are independent of the concrete
public-checking realization.

\subsection{Multi-Position Voting and Error \\ Correction}
\label{sec:method:eqs}

The payload is expanded into \(N\) public one-bit equations. For each equation
index \(i\in[N]\), public randomness derived from \((\ctx,i)\) selects a support
set
\[
    S_i\subseteq[\ell], \qquad |S_i|=d_i,
\]
where \(d_i\) is the equation degree. The target equation bit is
\[
    y_i(\payload)=\bigoplus_{j\in S_i} m_j.
\]
Equivalently, let \(G\in\{0,1\}^{N\times \ell}\) be a public binary matrix with
\(G_{i,j}=1\) iff \(j\in S_i\). Then
\[
    Y(\payload)=G\payload \pmod 2.
\]

During generation, token \(x_t\) contributes a vote \((i_t,b_t)\). Ideally,
\[
    b_t = y_{i_t}(\payload),
\]
but fallback sampling, edits, token substitutions, and local desynchronization
can flip or remove votes. Let
\[
    V_i(X)=\{b_t: i_t=i,\ 1\leq t\leq T\}
\]
be the multiset of votes observed for equation \(i\). The verifier forms
\[
    \widehat{y}_i(X)=
    \mathsf{majority}(V_i(X))
\]
when \(|V_i(X)|\geq v_{\min}\), and discards equation \(i\) otherwise. Thus
the redundancy is not a packet-level error-correcting code that requires exact
payload recovery. It is a many-position voting code: multiple token positions
estimate the same equation bit, and many equation bits jointly score a payload.

This gives two useful noise reductions. First, if each vote for a retained
equation is independently correct with probability \(p_i>1/2\), then majority
voting reduces the bit error probability:
\[
    \Pr[\widehat{y}_i\neq y_i(\payload)]
    \leq
    \exp(-2|V_i|(p_i-1/2)^2).
\]
Second, even when some equation estimates remain wrong, registry scoring uses
many retained equations:
\[
    \mathsf{score}(X,\payload)
    =
    \frac{1}{|A(X)|}
    \sum_{i\in A(X)}
    \mathbf{1}[\widehat{y}_i(X)=y_i(\payload)].
\]
The detector therefore tolerates a nonzero fraction of incorrect equations as
long as the signed payload receives a score above threshold.


Random equations spread evidence across payload bits, while systematic
equations can be used as an ablation in which some equations directly reveal
individual payload bits. In the default construction, we use random low-degree
equations because they create many redundant, partially independent tests of
the same authorized payload.

\subsection{Watermark Authentication}
\label{sec:method:registry}

\scheme{} separates \emph{watermark evidence} from \emph{watermark authority}.
The evidence is the noisy vote pattern extracted from a text. The authority is
a signed registry record saying which payloads are valid for a public context.
The issuer samples a short payload
\[
    \payload \in \{0,1\}^{\ell}
\]
and signs a record binding it to the context and keys:
\[
\begin{aligned}
    R &=
    (\payload,\ell,H(\ctx),\mathsf{fp}(\pk_{\mathsf{sig}}),
      \mathsf{fp}(\pk_{\mathsf{samp}}),\sigma),\\
    \sigma &\leftarrow
    \Sign_{\sk_{\mathsf{sig}}}
    (\payload,\ell,H(\ctx),
      \mathsf{fp}(\pk_{\mathsf{sig}}),
      \mathsf{fp}(\pk_{\mathsf{samp}})).
\end{aligned}
\]
The registry stores \(R\). The payload is intentionally short: it is an
authorized identifier, not a long signature that must be recovered from the
text. Authenticity comes from the signed registry record; the text only needs
to provide evidence for one of the authorized payloads.

For a candidate text \(X\), the verifier accepts only if there exists a
registry record \(R\) such that
\[
\begin{aligned}
    \Verify_{\pk_{\mathsf{sig}}}(R,\ctx)&=1,\\
    \mathsf{score}(X,\payload_R)&\geq\theta,\\
    |A(X)|&\geq n_{\min},\\
    \mathsf{rank}(G_{A(X)})&\geq r_{\min}.
\end{aligned}
\]
The signature prevents an adversary from introducing an arbitrary payload after
seeing a text. The evidence gates prevent acceptance from a small number of
duplicate or low-rank equations. Under EUF-CMA security of the signature
scheme, any successful acceptance for a payload-context pair outside the
registry either yields a signature forgery or a statistical false positive of
the voting test.

\subsection{Security Analysis and Design Discussion}
\label{sec:method:analysis}

The security objective of \scheme{} is deliberately narrower, and more operational, than proving that every accepting document was wholly produced by the issuer. An accepting decision should establish that the supplied text contains enough independent evidence for an issuer-signed, context-bound payload. The construction targets three linked properties: (i) without the sampling secret, fresh token labels should be computationally unpredictable; (ii) redundant token votes should still reveal the intended equation bits when each retained vote is biased toward correctness; and (iii) a plain or unauthorized text should not accumulate enough agreement with any signed registry record to pass the detector, except with the stated statistical tail probability. The signed registry supplies payload authorization, whereas the equation, rank, and score gates supply robustness to noisy textual evidence. 

These guarantees are conditional on the public-key sampling interface described in Section~\ref{subsec:public_verification}. In particular, the current prototype uses a public-extraction placeholder to evaluate sampling, voting, and registry scoring; it does not yet instantiate a concrete
publicly-checkable secret-label primitive. Thus, the theorem below is a registry-backed soundness statement under the DH-label and independent-vote
assumptions, not a claim that a detector-aware adversary with unrestricted access to a fully public predicate cannot optimize an accepting sequence. A copied sufficiently long watermarked span may also remain valid. These boundaries are inherent to the evidence semantics and are discussed after the formal results.

We now prove the three local claims that support this goal, and then state the combined soundness theorem. 

\begin{lemma}[DH label pseudorandomness]
\label{lem:dh-prf}
Assume \(H_{\mathbb{G}}\) is modeled as a random oracle into a prime-order group
\(\mathbb{G}\), DDH holds in \(\mathbb{G}\), and
\(\mathsf{Hash}_{\mathsf{lab}}\) is a pseudorandom extractor for group elements.
For any polynomial adversary that does not know \(\sk_{\mathsf{samp}}\), the
label \(\Label_{\sk_{\mathsf{samp}}}(\tau)\) for a fresh transcript item
\(\tau\) is computationally indistinguishable from
\((I,B)\leftarrow [N]\times\{0,1\}\), up to the DDH and hash advantages.
\end{lemma}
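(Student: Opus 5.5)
The argument is a two-hybrid reduction that replaces, one at a time, each computational object between the adversary's view and a uniform label. In the real game $\mathsf{H}_0$ the adversary $\Adv$ receives $g$, $A=g^a$, random-oracle access to $H_{\mathbb{G}}$, possibly labels of previously queried transcript items, and finally the challenge label $\mathsf{Hash}_{\mathsf{lab}}(U_\tau^a,\tau)$ for a fresh $\tau$. In $\mathsf{H}_1$ the DH value $U_\tau^a$ is replaced by a uniform group element $Z\leftarrow\mathbb{G}$. In $\mathsf{H}_2$ the label $\mathsf{Hash}_{\mathsf{lab}}(Z,\tau)$ is replaced by a uniform pair $(I,B)\leftarrow[N]\times\{0,1\}$. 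The triangle inequality then bounds the total advantage by $\mathsf{Adv}^{\mathrm{ddh}}+\mathsf{Adv}^{\mathrm{ext}}$, up to a polynomial loss from query guessing.

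For $\mathsf{H}_0\to\mathsf{H}_1$ I build a DDH distinguisher $\mathcal{D}$. On input $(g,A,U,Z)$, $\mathcal{D}$ hands $A$ to $\Adv$ as $\pk_{\mathsf{samp}}$. It programs the random oracle so that $H_{\mathbb{G}}(\tau)=U$ on the challenge item. For every other query $\tau'$ it sets $H_{\mathbb{G}}(\tau')=g^{r_{\tau'}}$ with $r_{\tau'}\leftarrow\mathbb{Z}_q$ chosen by $\mathcal{D}$, so $\mathcal{D}$ can answer label queries for $\tau'$ by computing $A^{r_{\tau'}}=(g^{r_{\tau'}})^a$ without knowing $a$. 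These programmed outputs are uniform in $\mathbb{G}$ because $\mathbb{G}$ has prime order, so the simulation is perfect. If $Z=U^a$, $\mathcal{D}$ simulates $\mathsf{H}_0$; if $Z$ is uniform, it simulates $\mathsf{H}_1$.

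The main obstacle is that $\mathcal{D}$ does not know in advance which oracle query will be the challenge $\tau$. I would handle this with the standard guessing argument: $\mathcal{D}$ picks a random index among the at most $q_H$ oracle queries and embeds $U$ there, losing a factor $q_H$. Freshness guarantees that $\tau$ is never a label query, so the embedding is never contradicted. A tighter alternative is random self-reducibility of DDH: set $H_{\mathbb{G}}(\tau')=U^{s}g^{r}$ and answer with $Z^{s}A^{r}$. This correctly answers every query when $Z=U^a$, but when $Z$ is uniform the answers to other queries become inconsistent. That inconsistency is why the $q_H$ loss is the route I would actually take.

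For $\mathsf{H}_1\to\mathsf{H}_2$ I apply the pseudorandom-extractor assumption on $\mathsf{Hash}_{\mathsf{lab}}$ directly. Its input $Z$ is uniform over $\mathbb{G}$ and independent of everything else in $\Adv$'s view, so any distinguisher between the two hybrids breaks the extractor with the same advantage. If $N$ is not a power of two, the reduction of the extractor output to $[N]$ adds a modular-bias term of order $N/2^{\kappa}$, where $\kappa$ is the extractor output length; I would fold this into the hash advantage. Finally I would note that freshness of $\tau$ is essential to the whole argument, since a repeated $\tau$ returns the same deterministic label. This is consistent with the lemma's restriction to fresh transcript items and with the local-transcript design $\tau_t(x)=(\ctx,c_t,x)$.
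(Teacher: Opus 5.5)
Your proposal is correct and follows the paper's own two-hybrid argument: $U_\tau$ is uniform in the random-oracle model, DDH replaces $U_\tau^a$ by a uniform group element, and the extractor then makes the label uniform. You also supply details the paper's proof leaves implicit, namely programming $H_{\mathbb{G}}$, answering other label queries via known exponents $A^{r_{\tau'}}$, and guessing the challenge query.

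One correction to your aside on rerandomization: that route is actually tight. Write $U=g^{\beta}$ and $Z=g^{\gamma}$. The map $(s,r)\mapsto(\beta s+r,\ \gamma s+ar)$ is a bijection of $\mathbb{Z}_q^2$ whenever $\gamma\neq a\beta$, so in the uniform case the answers $Z^{s}A^{r}$ form a random function rather than an unusable inconsistency. Going $\mathsf{H}_0\to$ (all-random) $\to\mathsf{H}_1$ therefore costs $2\,\mathsf{Adv}^{\mathrm{ddh}}$ rather than $q_H\cdot\mathsf{Adv}^{\mathrm{ddh}}$, which better matches the statement's ``up to the DDH advantage.''
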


\begin{proof}
Let \(U_\tau=H_{\mathbb{G}}(\tau)\). In the random-oracle model,
\(U_\tau\) is distributed as an independent group element, which we may write
as \(g^r\) for random \(r\in\mathbb{Z}_q\). The real DH label is derived from
\[
    z_\tau=U_\tau^a=g^{ar},
\]
where \(A=g^a\) is public and \(a\) is the sampling secret. By the DDH
assumption, the tuple \((g,g^a,g^r,g^{ar})\) is computationally
indistinguishable from \((g,g^a,g^r,g^u)\) for random
\(u\in\mathbb{Z}_q\). Replacing \(z_\tau\) by a random group element therefore
changes the adversary's view only by the DDH advantage. Applying
\(\mathsf{Hash}_{\mathsf{lab}}\) to an indistinguishable random group element
gives an output indistinguishable from a uniform element of
\([N]\times\{0,1\}\), up to the extractor/hash advantage.
\end{proof}

\begin{lemma}[Rejection-sampling distortion]
\label{lem:distortion}
At position \(t\), let \(M_t(\payload)\subseteq C_t\) be the compatible
candidate set and
\(\rho_t=\widetilde{P}_t(M_t(\payload))\). If
\(\rho_t>0\) and the generator samples
\(Q_t(x)=\widetilde{P}_t(x\mid x\in M_t(\payload))\), then
\[
    D_{\mathrm{KL}}(Q_t\|\widetilde{P}_t)
    =
    \log \frac{1}{\rho_t}.
\]
\end{lemma}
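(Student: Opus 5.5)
The plan is to compute the divergence directly from the explicit form of the conditioned distribution. Since \(\rho_t=\widetilde P_t(M_t(\payload))>0\), the conditional distribution is well defined:
\[
    Q_t(x)=\frac{\widetilde P_t(x)\mathbf{1}[x\in M_t(\payload)]}{\rho_t},
    \qquad x\in C_t .
\]
The first step is to record that \(Q_t\) is supported on \(M_t(\payload)\subseteq C_t\), which lies inside the support of \(\widetilde P_t\). Hence \(Q_t\ll\widetilde P_t\) and the divergence is finite.

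The second step is to restrict the sum to the support of \(Q_t\), using the convention \(0\log 0=0\) for tokens outside \(M_t(\payload)\). Tokens in \(M_t(\payload)\) with \(\widetilde P_t(x)=0\) also contribute nothing. On every remaining token the likelihood ratio is the same constant:
\[
    \frac{Q_t(x)}{\widetilde P_t(x)}=\frac{1}{\rho_t}.
\]
Therefore
\[
    D_{\mathrm{KL}}(Q_t\|\widetilde P_t)
    =\sum_{x\in M_t(\payload)}Q_t(x)\log\frac{1}{\rho_t}
    =\log\frac{1}{\rho_t}\sum_{x\in M_t(\payload)}Q_t(x)
    =\log\frac{1}{\rho_t},
\]
because \(Q_t\) sums to one on \(M_t(\payload)\).

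There is no real obstacle; the only care needed is the support and zero-probability bookkeeping just described. I would close with two remarks. First, the bound ties distortion to the compatible mass: under Lemma~\ref{lem:dh-prf}, each candidate is compatible with probability about \(1/2\), so typically \(\rho_t\approx 1/2\) and the per-step cost is about \(\log 2\). Second, the fallback branch of Algorithm~\ref{alg:embed} (taken when \(M_t=\emptyset\)) samples from \(\widetilde P_t\) itself and so contributes zero divergence.
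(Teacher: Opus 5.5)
Your proof is correct and is the same direct computation as the paper's: write \(Q_t=\widetilde P_t/\rho_t\) on \(M_t(\payload)\), observe that the likelihood ratio there is the constant \(1/\rho_t\), and sum. Your support and zero-probability bookkeeping is a harmless extra, and the lemma proof itself needs nothing more. One caution applies only to your first closing remark, not the proof: \(\rho_t\approx 1/2\) is a heuristic that holds only at high-entropy positions. When one candidate carries most of the mass, \(\rho_t\) is close to either \(0\) or \(1\), so \(\log(1/\rho_t)\) can be far from \(\log 2\) in either direction.
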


\begin{proof}
For \(x\in M_t(\payload)\),
\[
    Q_t(x)=\frac{\widetilde{P}_t(x)}{\rho_t},
    \qquad
    \rho_t=\sum_{x\in M_t(\payload)}\widetilde{P}_t(x),
\]
and \(Q_t(x)=0\) otherwise. Thus
\[
\begin{aligned}
    D_{\mathrm{KL}}(Q_t\|\widetilde{P}_t)
    &=
    \sum_{x\in M_t(\payload)}
    Q_t(x)\log\frac{Q_t(x)}{\widetilde{P}_t(x)}\\
    &=
    \sum_{x\in M_t(\payload)}
    Q_t(x)\log\frac{1}{\rho_t}
    =
    \log\frac{1}{\rho_t}.
\end{aligned}
\]
which proves the lemma.
\end{proof}

\begin{lemma}[Majority-vote concentration]
\label{lem:majority}
For equation \(i\), suppose the verifier obtains \(r\) independent votes, each
equal to \(y_i(\payload)\) with probability at least \(p>1/2\). Then the
majority estimate satisfies
\[
    \Pr[\widehat{y}_i\neq y_i(\payload)]
    \leq
    \exp(-2r(p-1/2)^2).
\]
\end{lemma}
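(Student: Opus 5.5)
The plan is to reduce the statement to Hoeffding's inequality applied to indicator variables of correct votes. Let the $r$ votes be $b^{(1)},\ldots,b^{(r)}$. Define
\[
Z_k=\mathbf{1}[b^{(k)}=y_i(\payload)]\in\{0,1\}.
\]
By hypothesis the $Z_k$ are independent with $\mathbb{E}[Z_k]=p_k\geq p>1/2$. Let $S=\sum_{k=1}^r Z_k$ count the correct votes, so $\mathbb{E}[S]\geq rp$.

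First, I will characterize the error event. Since the votes are binary, the majority estimate $\widehat y_i$ can differ from $y_i(\payload)$ only if the correct votes fail to be a strict majority, i.e.\ $S\leq r/2$. This assumes ties are counted as errors, which is the conservative convention and covers any tie-breaking rule. Hence
\[
\Pr[\widehat y_i\neq y_i(\payload)]\leq \Pr[S\leq r/2].
\]

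Second, I will bound the right-hand side with Hoeffding's inequality for independent $[0,1]$-valued variables:
\[
\Pr\bigl[S-\mathbb{E}[S]\leq -r\epsilon\bigr]\leq \exp(-2r\epsilon^2).
\]
Set $\epsilon=\mathbb{E}[S]/r-1/2$. This satisfies $\epsilon\geq p-1/2>0$, and the event $S\leq r/2$ is exactly $S-\mathbb{E}[S]\leq -r\epsilon$. Monotonicity of $\exp(-2r\epsilon^2)$ in $\epsilon\geq0$ then gives the claimed bound $\exp(-2r(p-1/2)^2)$. The case of heterogeneous $p_k$ needs no extra work, because Hoeffding only uses the mean and the boundedness of each $Z_k$.

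The only delicate point I anticipate is this reduction from the majority error to the tail event $S\leq r/2$, in particular the tie case for even $r$. I would state explicitly that ties are counted as errors, so the inequality holds regardless of how the verifier breaks ties. The rest is a direct application of Hoeffding, with no dependence on the earlier lemmas beyond the independence assumption in the statement.
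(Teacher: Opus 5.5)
Your proposal is correct and follows the same route as the paper: you introduce indicators of correct votes, reduce the majority error (with ties counted as errors) to the event \(\sum_j Z_j\le r/2\), and apply Hoeffding's inequality. Centering at the true mean \(\mathbb{E}[S]\) and then using monotonicity in \(\epsilon\) handles the case \(\mathbb{E}[Z_j]>p\) slightly more cleanly than the paper, whose display is centered at \(p\) and leaves that step implicit.
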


\begin{proof}
Let \(Z_j\in\{0,1\}\) indicate whether the \(j\)-th vote for equation \(i\) is
correct. By assumption, the \(Z_j\)'s are independent and
\(\mathbb{E}[Z_j]\geq p>1/2\). The majority estimate is wrong only if
\(\sum_{j=1}^{r} Z_j\leq r/2\). Hoeffding's inequality gives
\[
    \Pr\!\left[
        \frac{1}{r}\sum_{j=1}^{r}Z_j - p \leq -(p-1/2)
    \right]
    \leq
    \exp(-2r(p-1/2)^2),
\]
which proves the stated bound.
\end{proof}

\begin{theorem}[Registry-backed soundness]
\label{thm:soundness}
Suppose the signature scheme is EUF-CMA secure, Lemma~\ref{lem:dh-prf} holds,
and for unwatermarked text every retained equation agrees with any fixed
authorized payload with probability at most \(1/2+\epsilon\). For threshold
\(\theta>1/2+\epsilon\) and \(n=|A(X)|\) retained equations, the probability
that an adversary makes \(\mathsf{Detect}\) accept an unauthorized or
unwatermarked text is bounded by
\[
\begin{aligned}
    \mathsf{Adv}_{\Pi}^{\mathsf{sound}}(\Adv)
    \leq\;&
    \mathsf{Adv}_{\Sigma}^{\mathsf{euf}}(\mathcal{B})
    + \mathsf{Adv}_{\mathbb{G}}^{\mathsf{ddh}}(\mathcal{C})\\
    &+ |\Reg|\exp(-2n(\theta-1/2-\epsilon)^2).
\end{aligned}
\]
\end{theorem}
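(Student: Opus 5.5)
I will prove the bound with a game-hopping argument that splits the acceptance event according to the registry record that triggers acceptance. Let $\mathsf{Acc}$ be the event that $\mathsf{Detect}(X,\ctx,\Reg;\pk_{\mathsf{sig}},\pk_{\mathsf{samp}})=1$ on the adversary's output $X$, where $X$ is either unwatermarked or claims an unauthorized payload--context pair. By Algorithm~\ref{alg:verify}, acceptance requires some record $R$ with $\Verify_{\pk_{\mathsf{sig}}}(R,\ctx)=1$ and $\mathsf{score}(X,\payload_R)\geq\theta$. I split $\mathsf{Acc}$ into two events:
\begin{itemize}
\item[(a)] $\mathsf{Forge}$: the accepting $R$ verifies, but its signed message $(\payload_R,\ell,H(\ctx),\mathsf{fp}(\pk_{\mathsf{sig}}),\mathsf{fp}(\pk_{\mathsf{samp}}))$ was never signed by the issuer. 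This includes a wrong-context record, since $H(\ctx)$ is bound inside the signature.
\item[(b)] $\mathsf{Stat}$: the accepting $R$ is a genuine issuer record, yet the evidence clears the threshold.
\end{itemize}
The three terms of the theorem come from bounding these events: $\Pr[\mathsf{Forge}]$ gives the EUF-CMA term, and $\Pr[\mathsf{Stat}]$ gives the DDH and concentration terms.

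\textbf{Forgery term.} I construct $\mathcal{B}$ against $\Sigma$. Given the challenge $\pk_{\mathsf{sig}}$, $\mathcal{B}$ generates $(a,g^a)$ itself, simulates $\mathsf{Authorize}$ through its signing oracle, and runs $\Adv$. On $\mathsf{Forge}$ it outputs the accepting $(R,\sigma)$ as a fresh forgery. This gives $\Pr[\mathsf{Forge}]\leq\mathsf{Adv}_{\Sigma}^{\mathsf{euf}}(\mathcal{B})$.

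\textbf{DDH hop.} Conditioned on $\neg\mathsf{Forge}$, every accepting record lies in the honest registry, which has at most $|\Reg|$ payloads fixed independently of the adversary's text labels. I move to a hybrid in which every label $\Label_{\sk_{\mathsf{samp}}}(\tau)$ for a transcript item not produced by $\mathsf{WGen}$ is replaced by uniform $(I,B)$. By Lemma~\ref{lem:dh-prf} and a reduction $\mathcal{C}$ that embeds the DDH challenge into $H_{\mathbb{G}}$ outputs via random self-reducibility, the acceptance probability changes by at most $\mathsf{Adv}_{\mathbb{G}}^{\mathsf{ddh}}(\mathcal{C})$. I fold the extractor advantage into this term, as the statement implicitly does.

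\textbf{Tail bound.} In the hybrid, fix a genuine record $R$. By hypothesis, each of the $n=|A(X)|$ retained indicators $\mathbf{1}[\widehat y_i(X)=y_i(\payload_R)]$ has mean at most $1/2+\epsilon$. I apply Hoeffding exactly as in Lemma~\ref{lem:majority}, but now at the equation level with deviation $\theta-1/2-\epsilon>0$. This gives $\Pr[\mathsf{score}\geq\theta]\leq\exp(-2n(\theta-1/2-\epsilon)^2)$. A union bound over the records in $\Reg$ yields the last term, and summing the three contributions gives the stated bound.

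\textbf{Main obstacle.} The hard step is justifying independence of the $n$ equation-level indicators and the fact that $n$ is data-dependent. Distinct equations draw votes from disjoint sets of positions, but an adaptive adversary chooses which equations to populate and how. I would handle this in two steps. First, I read the hypothesis ``agrees with probability at most $1/2+\epsilon$'' as holding conditionally on the past, so that the indicators form a supermartingale difference sequence. Second, I apply Azuma--Hoeffding in place of plain Hoeffding. To handle the random $n$, I condition on $A(X)$, which in the hybrid is determined by uniform labels independent of the bits. If this fails, the fallback is to state the bound for each fixed $n$, which is how the theorem is phrased.

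A second subtlety is that a fully public $\PKCheck$ lets the adversary compute labels, so the DDH hop only controls labels the adversary cannot evaluate. The bound therefore rests on the independent-vote assumption that the discussion preceding the theorem makes explicit, and I would note this caveat rather than try to remove it.
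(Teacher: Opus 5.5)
Your proposal is correct and follows essentially the same route as the paper's proof. You split acceptance on whether the accepting record is a signature forgery (the EUF-CMA term) or a genuine registry record, apply the Lemma~\ref{lem:dh-prf} DDH hybrid to the retained labels, and then use equation-level Hoeffding with deviation $\theta-1/2-\epsilon$ and a union bound over the $|\Reg|$ signed records. Your added care goes beyond the paper, which simply asserts that the $n$ retained equations are independent in the hybrid: you read the hypothesis conditionally, use Azuma--Hoeffding, condition on $A(X)$, and note that a fully public $\PKCheck$ limits what the DDH hop can control.
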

\begin{proof}
	Consider an adversary that causes \(\mathsf{Detect}\) to accept. The detector
	only considers records whose signatures and context bindings verify. If the
	accepted record is not a valid issuer-signed record for the public context and
	keys, the adversary supplies a valid EUF-CMA forgery. This event is bounded by
	\(\mathsf{Adv}_{\Sigma}^{\mathsf{euf}}(\mathcal{B})\).
	
	Otherwise, the accepted record is a genuinely signed registry record. For an
	unwatermarked or unauthorized text, apply the hybrid of
	Lemma~\ref{lem:dh-prf} to replace the retained DH-derived labels with
	pseudorandom labels. The change in acceptance probability is bounded by
	\(\mathsf{Adv}_{\mathbb{G}}^{\mathsf{ddh}}(\mathcal{C})\), together with the
	hash/extractor loss already absorbed by the lemma's assumption. In the hybrid
	experiment, each of the \(n\) retained independent equations agrees with any
	fixed authorized payload with probability at most \(1/2+\epsilon\).
	
	For a fixed signed record and a threshold
	\(\theta>1/2+\epsilon\), Hoeffding's inequality gives
	\[
	\Pr[\mathsf{score}(X,\payload_R)\geq\theta]
	\leq
	\exp(-2n(\theta-1/2-\epsilon)^2).
	\]
	The detector may test at most \(|\Reg|\) signed records, so a union bound
	multiplies this probability by \(|\Reg|\). Adding the signature-forgery and
	DH-hybrid events gives
	\[
	\begin{aligned}
		\mathsf{Adv}_{\Pi}^{\mathsf{sound}}(\Adv)
		\leq\;&
		\mathsf{Adv}_{\Sigma}^{\mathsf{euf}}(\mathcal{B})
		+ \mathsf{Adv}_{\mathbb{G}}^{\mathsf{ddh}}(\mathcal{C})\\
		&+ |\Reg|\exp(-2n(\theta-1/2-\epsilon)^2),
	\end{aligned}
	\]
	which proves the theorem.
\end{proof}

\paragraph{Why registry-aided scoring?}
Exact payload decoding is brittle: a fallback token, local edit, or
tokenization shift can corrupt a packet-like encoding. \scheme{} instead treats
the text as a noisy evidence source. The registry gives the detector a bounded
set of authorized payload hypotheses, and the token sequence supplies votes for
or against each hypothesis. This changes the recovery target from ``decode the
payload exactly'' to ``accumulate sufficient independent evidence for a signed
payload,'' which better matches public provenance.

\paragraph{Quality and fallback.}
Lemma~\ref{lem:distortion} also explains the quality tradeoff. If the compatible
candidate mass \(\rho_t\) is large, conditioning introduces little local
distortion. If \(\rho_t\) is small, insisting on a compatible token would force
low-quality choices. The fallback rule sacrifices one noisy vote at such
positions to keep generation close to the deployed top-\(k\)/top-\(p\) sampler.

\paragraph{Copy-paste and document integrity.}
\scheme{} verifies the presence of sufficient issuer-authorized watermark
evidence. It is not a proof that every token in the document is untampered. If
an adversary pastes a valid watermarked span into a larger document, the span
may still verify. Stronger document-integrity claims require signed text
commitments, span localization, or application-specific policies layered above
the watermark detector.

\section{Evaluation}
\label{sec:evaluation}

We evaluate \scheme{} as a public-key and registry-aided watermarking mechanism for LLM-generated text. Our experiments examine not only whether the final detector accepts or rejects a text, but also whether the extracted token votes provide sufficient equation coverage, linear independence, and agreement with an issuer-authorized registry payload. In particular, we study the following research questions. \\

\begin{compactitem}
	\item[\textbf{RQ1: Detectability and soundness.}] Can a public verifier reliably detect watermarked generations while rejecting ordinary generations and watermarked texts verified under an incorrect public context?
	\item[\textbf{RQ2: Robustness.}] How much valid watermark evidence remains after token deletion, substitution, truncation, copy-paste, and suffix-injection attacks?
	\item[\textbf{RQ3: Parameter sensitivity.}] How do the generated text length, payload length, registry-score threshold, and local context length affect robustness and false-positive behavior?
	\item[\textbf{RQ4: Generation overhead.}] What sampling overhead is introduced by watermark embedding, and how often does the generator fall back to ordinary sampling?
\end{compactitem}

\subsection{Experimental Setup}
\label{subsec:experimental_setup}

Unless otherwise specified, we use an open-weight LLaMA-3 8B model as the underlying generator. We apply top-$k$ and top-$p$ filtering with

$$
k=200,
\qquad
p=0.95,
$$

and use a sampling temperature of $1.0$. Each default run generates $2{,}000$ new tokens. The default watermark payload contains $32$ bits and is expanded into $96$ public one-bit equations with equation degree $3$. The local transcript used by the token-labeling function contains the previous four tokens.

The registry-aided detector uses a registry-agreement threshold of $0.5$. It additionally requires at least $32$ accepted equations, at least $32$ unique equation indices, and a minimum equation rank of $24$. Unless otherwise stated, the default configuration is therefore

$$
(\ell,N,d,h,T,\theta)
=
(32,96,3,4,2000,0.5),
$$

where $\ell$ is the payload length, $N$ is the number of one-bit equations, $d$ is the equation degree, $h$ is the local context length, $T$ is the generation length, and $\theta$ is the registry-score threshold.

We use $30$ prompts for each evaluated configuration. The same prompt collection and attack suite are used within each parameter sweep. For every watermarked continuation, we additionally generate a plain continuation from the same prompt as a negative control. Consequently, the reported changes between configurations mainly reflect the evaluated watermark parameters rather than changes in the prompt distribution.

The current evaluation is intended to characterize the operating region and parameter tradeoffs of the prototype. With $30$ samples, one failure changes the reported valid rate by

$$
\frac{1}{30}\approx 0.0333.
$$

Therefore, differences such as $0.967$ versus $1.000$ correspond to only one sample and should not be interpreted as statistically definitive rankings between closely performing configurations. The consistent zero acceptance rate of the negative controls is nevertheless an important empirical result, although substantially larger negative corpora are required to estimate very small false-positive probabilities.

\subsection{Attacks and Negative Controls}
\label{subsec:attacks}

We evaluate the following transformations.

\textbf{Identity.}
The detector verifies the original watermarked generation without modification. This setting measures basic watermark detectability.

\textbf{Burst deletion.}
A contiguous span of $128$ tokens is removed. This attack tests whether the local transcript and equation-voting mechanism can recover after a localized synchronization loss.

\textbf{Middle crop.}
A contiguous middle segment containing $25\%$ of the generated tokens is removed. Compared with burst deletion, this transformation removes a larger fraction of watermark evidence while preserving both the beginning and the end of the sequence.

\textbf{Random deletion.}
Ten percent of the generated tokens are independently removed. Random deletion causes repeated local transcript disruption throughout the text and is therefore more challenging than deleting a single contiguous suffix.

\textbf{Random substitution.}
Ten percent of the tokens are replaced. This attack both destroys original watermark votes and introduces new votes that are generally unrelated to the registered payload.

\textbf{Prefix and suffix truncation.}
Twenty-five percent of the generated tokens are removed from the beginning or end. Prefix truncation is generally more disruptive because it changes the local transcript reconstructed for the earliest retained tokens, whereas suffix truncation mainly reduces the available amount of evidence.

\textbf{Copy-paste span.}
Only a subspan of the watermarked generation is retained. This test evaluates whether partial provenance evidence remains detectable when a watermarked passage is copied into another document.

\textbf{Malicious suffix.}
Unrelated text is appended to the watermarked output. The appended text contributes noisy or irrelevant votes and tests whether valid evidence can survive dilution by an unwatermarked suffix.

We also include three negative controls. \emph{Wrong public context} verifies watermarked text using a mismatched public context. \emph{Plain generation} verifies an ordinary unwatermarked continuation under the nominal context. \emph{Plain wrong context} combines ordinary generation with a mismatched context. These controls test whether the detector accepts texts merely because they contain many publicly extractable token labels, rather than because those labels agree with an authorized registry payload.

\subsection{Evaluation Metrics}
\label{subsec:metrics}

Our primary metric is the \emph{valid rate},

$$
\mathrm{ValidRate}
=
\frac{1}{n}
\sum_{j=1}^{n}
\mathbf{1}
\left[
\mathsf{Detect}(X_j)=1
\right],
$$

where $n=30$ is the number of evaluated samples in each configuration.

For each text, the detector groups public token votes by equation index. We report the average number of equations satisfying the evidence gates, denoted by \emph{accepted equations}. We also report the average rank of the retained equation matrix over $\mathbb{F}_2$ and the number of unique equation indices receiving sufficient evidence. These measurements distinguish three different conditions:

\begin{compactenum}
	\item whether the text contains many extractable token votes;
	\item whether the votes cover sufficiently many different equations;
	\item whether the resulting equation system agrees with a signed registry payload.
\end{compactenum}

For generation overhead, we use the embedding success rate

$$
r_{\mathrm{succ}}
=
\frac{T_{\mathrm{matched}}}{T},
$$

and fallback rate

$$
r_{\mathrm{fb}}
=
\frac{T_{\mathrm{fallback}}}{T}
=
1-r_{\mathrm{succ}},
$$

where $T_{\mathrm{matched}}$ is the number of positions at which a watermark-compatible token is selected and $T_{\mathrm{fallback}}$ is the number of positions generated using the quality-preserving fallback rule. We additionally report the average matching probability mass, candidate entropy, number of matching candidates, rejection attempts per token, and elapsed generation time when analyzing sampling behavior.

\begin{table*}[t]
	\footnotesize
	\setlength{\abovecaptionskip}{0cm}
	\centering
	\caption{Detection results of the default configuration on $30$ prompts. The registry-score threshold is $0.5$.}
	\tabcolsep=10pt
	\vspace{10pt}
	\begin{tabular}{lcccc}
		\hline \hline \noalign{\smallskip}
		Attack & Valid rate & Avg. accepted eq. & Avg. rank & Avg. unique eq. \\
		\midrule
		Identity                  & 1.000 & 93.83 & 32.00 & 95.93 \\
		Burst deletion, 128       & 1.000 & 93.63 & 32.00 & 95.93 \\
		Middle crop, 25\%         & 1.000 & 93.13 & 32.00 & 95.93 \\
		Random deletion, 10\%     & 0.967 & 91.53 & 32.00 & 96.00 \\
		Random substitution, 10\% & 0.967 & 91.57 & 32.00 & 95.97 \\
		Prefix truncation, 25\%   & 0.967 & 89.43 & 31.97 & 93.07 \\
		Suffix truncation, 25\%   & 1.000 & 93.20 & 32.00 & 95.93 \\
		Copy-paste span           & 1.000 & 90.97 & 32.00 & 95.87 \\
		Malicious suffix          & 1.000 & 93.80 & 32.00 & 96.00 \\
		\midrule
		Wrong public context      & 0.000 & 88.07 & 32.00 & 95.97 \\
		Plain generation          & 0.000 & 84.77 & 32.00 & 94.43 \\
		Plain wrong context       & 0.000 & 84.87 & 32.00 & 94.70 \\
		\noalign{\smallskip} \hline \hline
	\end{tabular}
	\label{tab:main_results}
\end{table*}

\begin{figure*}[t]
	\centering
\begin{tikzpicture}
\begin{groupplot}[
	group style={
		group size=2 by 1,
		horizontal sep=1.05cm,
	},
	width=0.455\textwidth,
	height=0.39\textwidth,
	xbar,
	/pgf/bar width=4.1pt,
	symbolic y coords={
		Identity,
		Burst deletion,
		Middle crop,
		Random deletion,
		Random substitution,
		Prefix truncation,
		Suffix truncation,
		Copy-paste,
		Malicious suffix,
		Wrong context,
		Plain generation,
		Plain wrong context
	},
	ytick={
		Identity,
		Burst deletion,
		Middle crop,
		Random deletion,
		Random substitution,
		Prefix truncation,
		Suffix truncation,
		Copy-paste,
		Malicious suffix,
		Wrong context,
		Plain generation,
		Plain wrong context
	},
	y dir=reverse,
	axis x line*=bottom,
	axis y line*=left,
	tick align=outside,
	tick style={black,thin},
	xmajorgrids,
	grid style={plotgrid,thin},
	nodes near coords,
	every node near coord/.append style={
		font=\scriptsize,
		anchor=west,
		xshift=1pt,
	},
	title style={font=\small\bfseries,align=left},
	label style={font=\small},
	tick label style={font=\scriptsize},
	clip=false,
]

\nextgroupplot[
	title={(a) Detection under edits and negative controls},
	title style={font=\normalfont, color=black, xshift=-8pt,yshift=-2pt},
	xmin=0,
	xmax=1.05,
	xtick={0,0.2,0.4,0.6,0.8,1.0},
	xticklabel style={/pgf/number format/fixed,/pgf/number format/precision=2},
	xlabel={Valid rate},
	point meta=x,
	nodes near coords={\pgfmathprintnumber[fixed,zerofill,precision=3]{\pgfplotspointmeta}},
]
\addplot[fill=plotteal,draw=none] coordinates {
	(1.000,Identity)
	(1.000,Burst deletion)
	(1.000,Middle crop)
	(0.967,Random deletion)
	(0.967,Random substitution)
	(0.967,Prefix truncation)
	(1.000,Suffix truncation)
	(1.000,Copy-paste)
	(1.000,Malicious suffix)
	(0.000,Wrong context)
	(0.000,Plain generation)
	(0.000,Plain wrong context)
};
\draw[plotamber,dashed,thick]
	(axis cs:0.95,Identity) -- (axis cs:0.95,Plain wrong context);

\nextgroupplot[
	title={(b) Evidence remains abundant in negatives},
	title style={font=\normalfont, color=black, xshift=-8pt,yshift=-2pt},
	xmin=80,
	xmax=97.5,
	xtick={80,84,88,92,96},
	xlabel={Average accepted equations (of 96)},
	yticklabels={},
	point meta=x,
	nodes near coords={\pgfmathprintnumber[fixed,precision=1]{\pgfplotspointmeta}},
]
\addplot[fill=plotteal,draw=none] coordinates {
	(93.83,Identity)
	(93.63,Burst deletion)
	(93.13,Middle crop)
	(91.53,Random deletion)
	(91.57,Random substitution)
	(89.43,Prefix truncation)
	(93.20,Suffix truncation)
	(90.97,Copy-paste)
	(93.80,Malicious suffix)
	(88.07,Wrong context)
	(84.77,Plain generation)
	(84.87,Plain wrong context)
};

\end{groupplot}
\end{tikzpicture}
	\caption{Default-configuration robustness and evidence coverage. Although
		negative texts still produce many accepted equations, their registry valid
		rate is zero; equation availability alone is therefore not an authentication
		signal. The amber dashed line marks a valid rate of $0.95$.}
	\label{fig:robustness-summary}
\end{figure*}

\subsection{Main Detection and Robustness Results}
\label{subsec:main_results}

Table~\ref{tab:main_results} reports the default configuration with a registry-score threshold of $0.5$. \scheme{} accepts all unmodified watermarked texts. It also achieves a valid rate of $1.000$ under burst deletion, middle cropping, suffix truncation, copy-paste, and malicious suffix injection. Random deletion, random substitution, and prefix truncation each cause one failure among the $30$ samples, corresponding to a valid rate of $0.967$.

Most importantly, none of the three negative controls are accepted. The wrong-context result shows that a valid watermark signal is bound to the public context rather than only to token-level statistical bias. Plain generations are also rejected even though they yield many public labels and almost complete equation coverage.

The accepted-equation statistics reveal an important property of the proposed detector. Plain and wrong-context texts still produce approximately $85$--$88$ accepted equations, ranks close to $32$, and more than $94$ unique equation indices. Therefore, equation availability and full rank alone are not sufficient for authentication. The decisive signal is whether the majority estimates of these equations agree with one particular signed registry payload above the selected threshold.

This result supports the separation between \emph{public extraction} and \emph{registry-backed authentication}. Public extraction ensures that any verifier can reconstruct token-level evidence. Registry scoring determines whether this evidence is statistically aligned with an issuer-authorized payload.

Among the attacks, prefix truncation results in the lowest average equation coverage. Its average number of unique equations decreases to $93.07$, compared with $95.93$ for identity. This is consistent with the use of a local transcript: deleting the prefix alters the transcript reconstructed at the beginning of the retained sequence and temporarily disrupts label synchronization. In contrast, suffix truncation preserves all preceding local transcripts and primarily removes evidence from the end of the sequence, resulting in no observed detection failures.

\begin{table*}[t]
	\footnotesize
	\setlength{\abovecaptionskip}{0cm}
	\centering
	\caption{Sensitivity to the registry-score threshold. "Positive average" is averaged over the nine watermarked conditions, while "negative maximum" is the maximum valid rate among the three negative controls. }
	\tabcolsep=10pt
	\vspace{10pt}
	\begin{tabular}{ccccc}
		\hline \hline \noalign{\smallskip}
		Threshold & Identity & Copy-paste & Positive average & Negative maximum \\
		\midrule
		0.3 & 1.000 & 1.000 & 1.000 & 0.000 \\
		0.5 & 1.000 & 1.000 & 0.989 & 0.000 \\
		0.7 & 0.967 & 0.933 & 0.941 & 0.000 \\
		\noalign{\smallskip} \hline \hline
	\end{tabular}
	\label{tab:threshold_results}
\end{table*}

\begin{table*}[t]
	\footnotesize
	\setlength{\abovecaptionskip}{0cm}
	\centering
	\caption{Sensitivity to generated text length.}
	\tabcolsep=10pt
	\vspace{10pt}
	\begin{tabular}{rccccc}
		\hline \hline \noalign{\smallskip}
		Tokens & Identity & Positive average & Positive minimum & Negative maximum & Attempts/token \\
		\midrule
		500   & 1.000 & 0.952 & 0.800 & 0.000 & 4.14 \\
		1,000 & 1.000 & 0.996 & 0.967 & 0.000 & 3.88 \\
		1,500 & 1.000 & 0.989 & 0.967 & 0.000 & 3.29 \\
		2,000 & 1.000 & 0.989 & 0.967 & 0.000 & 3.00 \\
		3,000 & 1.000 & 0.959 & 0.867 & 0.000 & 2.67 \\
		\noalign{\smallskip} \hline \hline
	\end{tabular}
	\label{tab:length_results}
\end{table*}

Random deletion and random substitution are also more difficult than a single burst deletion. A localized burst affects a bounded region and the following context window, after which the detector can resynchronize. Random edits repeatedly disrupt local contexts throughout the sequence. Nevertheless, the redundant equation construction retains enough correct evidence to accept $29$ of the $30$ samples under both transformations.

Figure~\ref{fig:robustness-summary} makes the separation between evidence
availability and authentication explicit. The nine watermarked conditions
cluster between $0.967$ and $1.000$ valid rate, whereas all three negative
controls remain at zero even though they retain roughly $85$--$88$ accepted
equations. The registry agreement test, rather than equation count alone, is
therefore responsible for the observed negative-control rejection.

\subsection{Registry-Score Threshold}
\label{subsec:threshold}

The registry-score threshold determines how strongly the extracted equation estimates must agree with a signed payload. For a candidate record $R$, the detector computes

$$
\mathsf{score}(X,m_R)
=
\frac{1}{|A(X)|}
\sum_{i\in A(X)}
\mathbf{1}
\left[
\widehat{y}_i(X)=y_i(m_R)
\right],
$$
and accepts only when

$$
\mathsf{score}(X,m_R)\geq\theta
$$
and the additional evidence gates are satisfied.

Table~\ref{tab:threshold_results} compares thresholds $0.3$, $0.5$, and $0.7$. A threshold of $0.3$ accepts every attacked watermarked sample in this experiment, while all negative controls remain rejected. However, because random agreement with a fixed binary equation vector is expected to be centered near $0.5$, a threshold below $0.5$ provides a weak conceptual and statistical margin. Its apparent success on the current negative set should therefore not be interpreted as evidence that $0.3$ is generally safe.

Increasing the threshold to $0.7$ makes the detector substantially more conservative. The average positive valid rate decreases from $1.000$ at threshold $0.3$ to $0.941$, and the worst-case valid rate decreases to $0.867$. Even identity detection decreases to $0.967$, showing that the stronger threshold rejects some naturally noisy watermarked outputs.

We select $\theta=0.5$ as a practical operating point. It rejects all negative controls, retains perfect identity detection, and achieves an average valid rate of $0.989$ over the nine positive conditions.

The observed zero negative valid rate at all three thresholds must be interpreted with the sample size in mind. Across the three negative conditions, each threshold is evaluated on $90$ negative texts. Observing no false positives establishes empirical separation in the evaluated batch, but it does not establish a cryptographic or population-level false-positive rate of zero. A larger-scale negative evaluation is required to estimate the tail of the registry-score distribution.

\begin{table*}[t]
	\footnotesize
	\setlength{\abovecaptionskip}{0cm}
	\centering
	\caption{Sensitivity to payload length at $2{,}000$ generated tokens.}
	\tabcolsep=10pt
	\vspace{10pt}
	\begin{tabular}{rccccc}
		\hline \hline \noalign{\smallskip}
		Payload bits & Equations & Identity & Copy-paste & Positive average & Negative maximum \\
		\midrule
		32  & 96  & 1.000 & 0.967 & 0.989 & 0.000 \\
		64  & 192 & 0.967 & 0.933 & 0.948 & 0.000 \\
		128 & 384 & 0.967 & 0.833 & 0.848 & 0.000 \\
		\noalign{\smallskip} \hline \hline
	\end{tabular}
	\label{tab:payload_results}
\end{table*}

\begin{table*}[t]
	\footnotesize
	\setlength{\abovecaptionskip}{0cm}
	\centering
	\caption{Effect of local transcript length.}
	\tabcolsep=10pt
	\vspace{10pt}
	\begin{tabular}{rcccccc}
		\hline \hline \noalign{\smallskip}
		Context & Identity & Copy-paste & Positive avg. & Positive min. & Negative max. & Fallback rate \\
		\midrule
		0 & 0.900 & 0.733 & 0.815 & 0.700 & 0.000 & 0.182 \\
		1 & 1.000 & 0.967 & 0.996 & 0.967 & 0.000 & 0.215 \\
		2 & 0.967 & 0.967 & 0.974 & 0.967 & 0.000 & 0.245 \\
		4 & 1.000 & 1.000 & 1.000 & 1.000 & 0.000 & 0.275 \\
		8 & 1.000 & 1.000 & 1.000 & 1.000 & 0.000 & 0.271 \\
		\noalign{\smallskip} \hline \hline
	\end{tabular}
	\label{tab:context_results}
\end{table*}

\subsection{Effect of Generation Length}
\label{subsec:length}

Text length directly controls the amount of token-level evidence available to the verifier. Table~\ref{tab:length_results} summarizes generation lengths between $500$ and $3{,}000$ tokens.

At $500$ tokens, identity detection is already perfect, showing that the watermark can be detected without requiring the full default length. However, the shorter text is more vulnerable to distributed edits. Random deletion has a valid rate of $0.800$, while random substitution and malicious suffix injection both obtain $0.900$. Since a fixed-size attack removes or corrupts a larger fraction of the available evidence in a shorter sequence, fewer reliable votes remain after aggregation.

Performance improves sharply at $1{,}000$ tokens. The average positive valid rate reaches $0.996$, with the only observed failure occurring under random substitution. The $1{,}500$- and $2{,}000$-token configurations also remain highly robust, both obtaining an average positive valid rate of $0.989$.

Interestingly, increasing the length to $3{,}000$ tokens does not further improve robustness. Copy-paste decreases to $0.867$, random deletion decreases to $0.900$, and the average positive valid rate becomes $0.959$. At the same time, rejection attempts per token decrease from $4.14$ at $500$ tokens to $2.67$ at $3{,}000$ tokens. This suggests that the degradation is not caused by insufficient candidate matches. A plausible explanation is that long autoregressive continuations become more repetitive or locally concentrated, so additional tokens do not necessarily provide proportionally more independent watermark evidence. Because the current CSV does not contain a direct repetition metric, this explanation should be regarded as a hypothesis to be verified through a text-quality analysis rather than as a confirmed conclusion.

These results identify a practical operating region between $1{,}000$ and $2{,}000$ generated tokens. We use $2{,}000$ tokens in the default configuration because it provides a large evidence budget while retaining high robustness and lower average sampling overhead than the shorter settings.

\subsection{Effect of Payload Length}
\label{subsec:payload}

The payload-length experiment evaluates payloads of $32$, $64$, and $128$ bits. The number of equations is scaled proportionally:

$$
N=3\ell,
$$

resulting in $96$, $192$, and $384$ equations, respectively.

Table~\ref{tab:payload_results} shows that increasing the payload length consistently reduces robustness. The $32$-bit configuration achieves perfect identity detection and an average positive valid rate of $0.989$. With $64$ bits, identity decreases to $0.967$ and the average positive valid rate decreases to $0.948$. With $128$ bits, the average decreases further to $0.848$.

The degradation is especially pronounced for distributed corruption. At $128$ bits, random substitution reaches only $0.600$, random deletion reaches $0.667$, and copy-paste reaches $0.833$. In comparison, localized burst deletion still obtains $0.967$. These results indicate that increasing the number of equations does not fully compensate for the larger payload dimension. The detector must collect enough independent and sufficiently accurate evidence to distinguish one registered $128$-bit payload, while edits reduce or corrupt the votes distributed across the larger equation system.

This experiment supports the central design decision of \scheme{}: the text should carry evidence for a \emph{short registry identifier}, while authenticity is supplied by the digital signature stored in the registry. Directly embedding a long signature or a large metadata record would require substantially more reliable evidence and would reduce robustness under practical edits.

The fact that every negative control remains at $0.000$ for all payload lengths also indicates that the loss in positive valid rate is caused by insufficient or noisy authorized evidence, rather than by an increase in false-positive acceptance.

\subsection{Effect of Local Context Length}
\label{subsec:context_length}

The local transcript length $h$ controls how many previous tokens are included when deriving the label for the current token. Table~\ref{tab:context_results} evaluates

$$
h\in\{0,1,2,4,8\}.
$$

With $h=0$, token labels are independent of the preceding local transcript. This configuration performs substantially worse than the context-dependent variants. Identity detection is only $0.900$, copy-paste detection is $0.733$, and the average positive valid rate is $0.815$. Prefix truncation is the most difficult attack, with a valid rate of $0.700$.

\begin{figure*}[t]
	\centering
\begin{tikzpicture}

\begin{scope}[xshift=1cm,yshift=4.7cm,font=\footnotesize]
	\draw[plotnavy,thick,mark=*,mark options={solid,fill=plotnavy}]
		(0,0) -- (0.55,0);
	\node[anchor=west] at (0.65,0) {Identity};
	\draw[plotteal,thick,mark=square*,mark options={solid,fill=plotteal}]
		(2.25,0) -- (2.80,0);
	\node[anchor=west] at (2.90,0) {Positive average};
	\draw[plotcoral,thick,mark=triangle*,mark options={solid,fill=plotcoral}]
		(5.35,0) -- (5.90,0);
	\node[anchor=west] at (6.00,0) {Positive minimum};
	\draw[plotgray,thick,dashed,mark=diamond*,mark options={solid,fill=plotgray}]
		(8.55,0) -- (9.10,0);
	\node[anchor=west] at (9.20,0) {Negative maximum};
\end{scope}

\begin{groupplot}[
	group style={
		group size=2 by 2,
		horizontal sep=1.15cm,
		vertical sep=2.0cm,
	},
	width=0.455\textwidth,
	height=0.26\textwidth,
	xminorticks=false,
	ymin=0.58,
	ymax=1.04,
	ytick={0.6,0.8,1.0},
	ymajorgrids,
	grid style={plotgrid,thin},
	axis x line*=bottom,
	axis y line*=left,
	tick align=outside,
	tick style={black,thin},
	label style={font=\small},
	tick label style={font=\scriptsize},
	clip=false,
	every axis plot/.append style={thick},
]

\nextgroupplot[
	title={(a) Registry threshold},
	title style={font=\normalfont\small,color=black, xshift=-8pt,yshift=-2pt},
	xlabel={Threshold $\theta$},
	ylabel={Valid rate},
	xmin=0.28,
	xmax=0.72,
	xtick={0.3,0.5,0.7},
]
\addplot[plotnavy,mark=*,mark options={fill=plotnavy}] coordinates {
	(0.3,1.000) (0.5,1.000) (0.7,0.967)
};
\addplot[plotteal,mark=square*,mark options={fill=plotteal}] coordinates {
	(0.3,1.000) (0.5,0.989) (0.7,0.941)
};
\addplot[plotcoral,mark=triangle*,mark options={fill=plotcoral}] coordinates {
	(0.3,1.000) (0.5,0.967) (0.7,0.867)
};
\addplot[plotgray,dashed,mark=diamond*,mark options={fill=plotgray}] coordinates {
	(0.3,0.58) (0.5,0.58) (0.7,0.58)
};
\draw[plotamber,dotted,thick]
	(axis cs:0.5,0.58) -- (axis cs:0.5,1.03);

\nextgroupplot[
	title={(b) Generated length},
	title style={font=\normalfont\small,color=black, xshift=-8pt,yshift=-2pt},
	xlabel={Generated tokens},
	xmin=400,
	xmax=3100,
	xtick={500,1000,1500,2000,2500,3000},
	yticklabels={},
]
\addplot[plotnavy,mark=*,mark options={fill=plotnavy}] coordinates {
	(500,1.000) (1000,1.000) (1500,1.000) (2000,1.000) (3000,1.000)
};
\addplot[plotteal,mark=square*,mark options={fill=plotteal}] coordinates {
	(500,0.952) (1000,0.996) (1500,0.989) (2000,0.989) (3000,0.959)
};
\addplot[plotcoral,mark=triangle*,mark options={fill=plotcoral}] coordinates {
	(500,0.800) (1000,0.967) (1500,0.967) (2000,0.967) (3000,0.867)
};
\addplot[plotgray,dashed,mark=diamond*,mark options={fill=plotgray}] coordinates {
	(500,0.58) (1000,0.58) (1500,0.58) (2000,0.58) (3000,0.58)
};
\draw[plotamber,dotted,thick]
	(axis cs:2000,0.58) -- (axis cs:2000,1.03);

\nextgroupplot[
	title={(c) Payload length},
	title style={font=\normalfont\small,color=black, xshift=-8pt,yshift=-2pt},
	xlabel={Payload bits},
	ylabel={Valid rate},
	xmin=28,
	xmax=132,
	xtick={32,64,96,128},
]
\addplot[plotnavy,mark=*,mark options={fill=plotnavy}] coordinates {
	(32,1.000) (64,0.967) (128,0.967)
};
\addplot[plotteal,mark=square*,mark options={fill=plotteal}] coordinates {
	(32,0.989) (64,0.948) (128,0.848)
};
\addplot[plotcoral,mark=triangle*,mark options={fill=plotcoral}] coordinates {
	(32,0.967) (64,0.900) (128,0.600)
};
\addplot[plotgray,dashed,mark=diamond*,mark options={fill=plotgray}] coordinates {
	(32,0.58) (64,0.58) (128,0.58)
};
\draw[plotamber,dotted,thick]
	(axis cs:32,0.58) -- (axis cs:32,1.03);

\nextgroupplot[
	title={(d) Local context},
	title style={font=\normalfont\small,color=black, xshift=-8pt,yshift=-2pt},
	xlabel={Context length $h$},
	xmin=-0.35,
	xmax=8.35,
	xtick={0,1,2,4,8},
	yticklabels={},
]
\addplot[plotnavy,mark=*,mark options={fill=plotnavy}] coordinates {
	(0,0.900) (1,1.000) (2,0.967) (4,1.000) (8,1.000)
};
\addplot[plotteal,mark=square*,mark options={fill=plotteal}] coordinates {
	(0,0.815) (1,0.996) (2,0.974) (4,1.000) (8,1.000)
};
\addplot[plotcoral,mark=triangle*,mark options={fill=plotcoral}] coordinates {
	(0,0.700) (1,0.967) (2,0.967) (4,1.000) (8,1.000)
};
\addplot[plotgray,dashed,mark=diamond*,mark options={fill=plotgray}] coordinates {
	(0,0.58) (1,0.58) (2,0.58) (4,0.58) (8,0.58)
};
\draw[plotamber,dotted,thick]
	(axis cs:4,0.58) -- (axis cs:4,1.03);

\end{groupplot}
\end{tikzpicture}
	\caption{Parameter sensitivity across the registry threshold, generation
		length, payload length, and local context length. The positive minimum
		highlights the most difficult watermarked condition in each sweep; the
		negative maximum remains zero throughout. Amber dotted lines identify the
		default configuration. Each point summarizes $30$ prompts.}
	\label{fig:parameter-sensitivity}
\end{figure*}

Using only one preceding token raises the average positive valid rate to $0.996$. The two-token setting obtains $0.974$, while context lengths $4$ and $8$ accept every positive sample in their respective batches. All negative controls remain rejected for every context length.

The result for $h=0$ shows that local context is not merely a synchronization cost. Conditioning labels on a local transcript increases the diversity of token-label inputs and reduces repeated use of identical token labels in repetitive lexical patterns. This improves equation coverage and rank. For example, under identity verification, the context-free configuration obtains approximately $81.3$ unique equations and rank $31.23$, whereas the context-dependent configurations approach full coverage and rank.

Longer contexts introduce additional generation cost. Rejection attempts per token increase from $2.04$ at $h=0$ to $3.12$ at $h=4$ and $3.28$ at $h=8$. The fallback rate similarly increases from $0.182$ to approximately $0.27$. Since $h=4$ and $h=8$ have identical valid rates in this experiment, we select $h=4$ as the default to obtain the robustness benefit without the additional context-processing and sampling cost of the longer window.

The apparent non-monotonic result at $h=2$, which performs slightly worse than $h=1$, is small in absolute terms and corresponds to a limited number of samples. It should not be interpreted as evidence that a one-token context is intrinsically superior. The more reliable conclusion is that introducing local context provides a major improvement over $h=0$, while lengths between $1$ and $8$ offer a tradeoff between robustness and embedding cost.

Figure~\ref{fig:parameter-sensitivity} consolidates the four sweeps and shows
why the default configuration is a useful operating point. The selected
threshold preserves separation from the negative controls, lengths between
$1{,}000$ and $2{,}000$ avoid the weakest short- and long-generation results,
the $32$-bit payload avoids the robustness loss of larger payloads, and
$h=4$ reaches the best observed positive rates without extending the context
to eight tokens.

\begin{table*}[t]
	\footnotesize
	\setlength{\abovecaptionskip}{0cm}
	\centering
	\caption{Per-attack valid rates in the length sweep. All rows use 32-bit payloads, 96 degree-3 equations, context length 4, and threshold 0.5.}
	\tabcolsep=10pt
	\vspace{10pt}
	\begin{tabular}{rcccccccccc}
		\hline \hline \noalign{\smallskip}
		Tokens & Id. & Burst & Copy & Suffix & Mid & Rand del. & Rand sub. & Pre. trunc. & Suf. trunc. & Neg. max \\
		\midrule
		500 & 1.000 & 1.000 & 0.967 & 0.900 & 1.000 & 0.800 & 0.900 & 1.000 & 1.000 & 0.000 \\
		1,000 & 1.000 & 1.000 & 1.000 & 1.000 & 1.000 & 1.000 & 0.967 & 1.000 & 1.000 & 0.000 \\
		1,500 & 1.000 & 1.000 & 0.967 & 1.000 & 1.000 & 0.967 & 0.967 & 1.000 & 1.000 & 0.000 \\
		2,000 & 1.000 & 1.000 & 0.967 & 1.000 & 1.000 & 0.967 & 1.000 & 0.967 & 1.000 & 0.000 \\
		3,000 & 1.000 & 1.000 & 0.867 & 0.967 & 1.000 & 0.900 & 1.000 & 0.933 & 1.000 & 0.000 \\
		\noalign{\smallskip} \hline \hline
	\end{tabular}
	\label{tab:app-length-attacks}
\end{table*}

\subsection{Generation Overhead}
\label{subsec:generation_cost}

For the default parameter region, approximately three quarters of token positions successfully select watermark-compatible candidates, while the remaining positions use quality-preserving fallback sampling. In the threshold sweep, the embedding success rate ranges from approximately $0.732$ to $0.745$, corresponding to fallback rates between $0.255$ and $0.268$. The default threshold-$0.5$ batch has
$$
r_{\mathrm{succ}}=0.7323
$$
and
$$
r_{\mathrm{fb}}=0.2677.
$$

For a $2{,}000$-token generation, these rates correspond to approximately
$$
2000\times 0.7323 = 1464.6
$$
watermark-compatible positions and
$$
2000\times 0.2677 = 535.4
$$
fallback positions.

The default batch requires $3.24$ rejection attempts per generated token on average. The average matching probability mass is $0.754$, and the average number of matching candidates is approximately $5.81$. These results indicate that compatible candidates generally occupy substantial probability mass, but candidate availability is uneven across generation positions. The fallback mechanism is therefore necessary to avoid forcing low-probability candidates at positions where watermark-compatible choices are unavailable or unsuitable.

The parameter sweeps reveal two additional trends. First, rejection attempts per token decrease as generation length increases, from $4.14$ at $500$ tokens to $2.67$ at $3{,}000$ tokens. Second, increasing the local context length raises the sampling cost: attempts per token increase from $2.04$ at $h=0$ to $3.28$ at $h=8$. These results show that the watermark strength and context binding are obtained at a measurable but bounded sampling cost.

The current data report sampling effort and fallback behavior but do not contain direct text-quality measurements such as perplexity, semantic similarity, repetition rate, or human preference. Therefore, the fallback statistics support the intended quality-preserving behavior of the algorithm, but they do not by themselves establish that generated text quality is unchanged. A complete quality evaluation should additionally compare plain and watermarked outputs using both automated and human-centered metrics.

\subsection{Additional Results}

Table~\ref{tab:app-length-attacks} expands the length sweep by listing the
main positive and negative attacks. The table supports the trend discussed in
Section~4: identity detection remains stable across all lengths, while attacks
that retain only partial or noisy evidence are more sensitive to the usable
token budget and to long-generation degeneration.

Across all final sweeps, the maximum valid rate among the plain-generation,
plain-wrong-context, and wrong-public-context controls is 0.000. This does not
prove a negligible false-positive probability by itself, because the prompt set
contains only 30 prompts per configuration. It nevertheless confirms that the
registry score and context binding do not accidentally accept the negative
controls used in our benchmark.

\subsection{Summary of Findings}
\label{subsec:evaluation_summary}

The experimental results provide the following answers to the research questions.

\textbf{RQ1: Detectability and soundness.}
The default configuration detects all unmodified watermarked texts and rejects all plain-generation and wrong-context controls. Negative texts may still provide high equation coverage and full rank, but they fail registry-agreement verification. This confirms that public token extraction alone is not used as authentication.

\textbf{RQ2: Robustness.}
The watermark remains detectable under all evaluated transformations with valid rates between $0.967$ and $1.000$ in the default experiment. Localized deletion, middle cropping, suffix truncation, copy-paste, and malicious suffix injection cause no observed failures. Distributed deletion, substitution, and prefix truncation each cause one failure among $30$ samples.

\textbf{RQ3: Parameter sensitivity.}
A registry threshold of $0.5$ provides a strong balance between attacked-text robustness and negative-control rejection. Generated lengths between $1{,}000$ and $2{,}000$ tokens form the most reliable operating region. Short payloads are substantially more robust than long payloads, supporting the use of compact signed registry identifiers. Local transcript binding is essential: context lengths $4$ and $8$ outperform context-free labeling, while $h=4$ has lower overhead than $h=8$.

\textbf{RQ4: Generation overhead.}
Approximately $73\%$ of positions successfully embed compatible evidence in the default batch, while approximately $27\%$ use fallback sampling. The generator requires about $3.24$ rejection attempts per token. This overhead is non-negligible but remains bounded, and the fallback mechanism prevents the generator from forcing incompatible low-quality tokens.

Overall, the experiments support the central design of \scheme{}: public watermark verification is more robust when a short authorized payload is represented by many noisy local equations and detected through registry-aided evidence aggregation, rather than through exact recovery of a long cryptographic string.

\section{Discussion}
\label{sec:discussion}

\subsection{Public Verification Semantics}
\label{subsec:verification_semantics}

The verification result produced by \scheme{} should be interpreted carefully. An accepting result indicates that the examined text contains sufficient watermark evidence consistent with an issuer-authorized registry payload under the specified public context. It does not prove that every token in the document was generated by the issuer, nor does it certify the integrity of the complete document.

Formally, the detector establishes that there exists a valid registry record $R$ satisfying
$$
\mathsf{Verify}_{\mathsf{pk}_{\mathsf{sig}}}(R,ctx)=1,
$$
and
$$
\mathsf{score}(X,m_R)\geq\theta.
$$

Therefore, the semantic meaning of a successful verification is \emph{the text contains issuer-authorized watermark evidence}, rather than \emph{the entire document is an authentic issuer-generated document}. 

This distinction is particularly important under copy-paste and malicious-suffix attacks. A copied span from a watermarked document may still contain sufficient evidence for successful verification, even if the surrounding text originates from another source. Likewise, appending unrelated content introduces additional noisy votes but does not necessarily invalidate the existing watermark evidence. Consequently, \scheme{} is designed for provenance verification rather than document authentication. Applications requiring full-document integrity should combine the proposed watermark with complementary mechanisms such as document signatures or authenticated metadata.

\subsection{Why Registry-Aided Evidence Aggregation}
\label{subsec:evidence_aggregation}

A key design decision of \scheme{} is to verify registry-authorized evidence instead of recovering an exact embedded signature. This distinction is motivated by the inherently noisy nature of LLM text generation. Token substitutions, deletions, fallback sampling, and local synchronization errors all introduce uncertainty into the extracted watermark signal. Recovering a complete embedded bitstream therefore becomes increasingly fragile as the embedded payload grows.

Instead, \scheme{} expands a short payload into multiple one-bit equations and evaluates their statistical agreement with signed registry records. Verification succeeds whenever the aggregated evidence exceeds the decision threshold,
$$
\mathsf{score}(X,m_R)\geq\theta,
$$
rather than requiring exact recovery of every embedded bit.

This registry-aided formulation naturally matches the characteristics of LLM-generated text. Individual equation votes may be corrupted by editing or sampling noise, while redundant observations collected throughout the generated sequence still provide sufficient evidence for reliable verification. The experimental results confirm that this strategy remains robust under various token-level transformations while maintaining strong rejection performance on wrong-context and plain-generation inputs.

\subsection{Deployment Considerations}
\label{subsec:deployment}

The experimental results also provide practical guidance for deploying \scheme{}.

First, the payload embedded into the generated text should remain relatively short. Rather than embedding complete metadata or digital signatures, the payload only serves as a registry identifier. Rich metadata—including issuer identity, timestamps, generation policies, or application-specific information—can be stored in the signed registry record. This separation substantially improves robustness while preserving the ability to verify cryptographic authenticity.

Second, public verification should always be interpreted together with the associated registry. The detector verifies whether the observed watermark evidence is consistent with an authorized registry entry under the specified public context. Consequently, the registry becomes an integral component of the verification pipeline instead of merely serving as auxiliary metadata storage. This design naturally supports offline third-party verification without requiring access to the issuer's private detection service.

\subsection{Limitations and Future Work}
\label{subsec:limitations}

Although the proposed framework demonstrates encouraging robustness and public verifiability, several limitations remain.

First, the current evaluation primarily considers token-level editing operations, including deletion, substitution, truncation, copy-paste, and suffix injection. More semantic transformations, such as paraphrasing, summarization, and machine translation, may significantly alter local token distributions and deserve further investigation.

Second, the current implementation requires repeated candidate evaluation during generation, introducing additional sampling overhead. More efficient candidate indexing, batched cryptographic computation, and GPU-oriented implementations could substantially improve generation efficiency.

Finally, the present work focuses on registry-backed public verification rather than the strongest notion of generation unforgeability. Designing a practical public-key construction that simultaneously supports efficient public verification, strong cryptographic security, and high-quality text generation remains an interesting direction for future research.

\subsection{Generation Examples}
\label{app:generation-examples}

Following the completion-oriented presentation used in prior work
\cite{fairoze2023publicly}, Table~\ref{tab:generation-examples} places each
prompt beside a plain completion and a watermarked completion produced by
\scheme{}. We reviewed all 30 completion pairs for relevance, coherence,
fluency, and repetition, and selected ten strong pairs spanning narrative,
scientific, social, and policy-oriented prompts. All examples use Llama-3-8B
with a 32-bit payload expanded into 96 random degree-3 fountain equations and
a maximum of 500 new tokens. We show only an initial, coherent excerpt from
each output. Leading prompt-continuation scaffolding and Markdown delimiters
are omitted; the terminal ellipses are editorial truncation markers and were
not emitted by the model.

\begin{table*}[p]
	\centering
	\fontsize{7}{7.4}\selectfont
	\setlength{\abovecaptionskip}{0cm}
	\renewcommand{\arraystretch}{1.02}
	\setlength{\tabcolsep}{4pt}
	\caption{Ten selected high-quality completions from 30 generation tasks in
		the main experiment. IDs match the saved task indices. Plain and
		watermarked outputs were generated from the same prompt. Excerpts are
		truncated at a natural boundary and lightly normalized for space and
		typography.}
	\vspace{6pt}
	\begin{tabular}{@{}c p{0.175\textwidth} p{0.365\textwidth} p{0.365\textwidth}@{}}
		\hline \hline \noalign{\smallskip}
		ID & Prompt & Plain generation & \scheme{} watermarked generation \\
		\midrule
		000 &
		Explain how rainfall patterns affect crop growth in temperate regions. &
		Rainfall patterns play a crucial role in determining crop growth and
		productivity in temperate regions. The timing, amount, and distribution of
		rainfall affect soil moisture, seed germination, nutrient uptake, and
		ultimately crop yield. Adequate soil moisture supports root development,
		whereas drought or excess rainfall can stress crops and reduce
		yields.\ldots{} &
		Rainfall patterns have a significant impact on crop growth in temperate
		regions. During the growing season, sufficient rainfall promotes healthy
		development, water uptake, and nutrient uptake, whereas inadequate
		rainfall can expose crops to drought stress, reduced growth, lower yields,
		or even crop failure.\ldots{} \\
		\midrule
		005 &
		Write a concise overview of how vaccines train the immune system. &
		Vaccines train the immune system by exposing it to a weakened or altered
		form of a pathogen. The immune system produces antibodies that bind to the
		pathogen and remembers it, enabling a faster and more effective response
		during a later exposure and providing long-term protection.\ldots{} &
		Vaccines stimulate the immune system to recognize and respond to specific
		pathogens. They induce antibody production, activate immune cells, and
		create immunological memory. Memory T and B cells then support a rapid
		response to subsequent exposures.\ldots{} \\
		\midrule
		008 &
		Write a short essay about how urban parks influence public health. &
		Urban parks have become an essential component of city life, offering a
		tranquil escape from the hustle and bustle of urban living. They can
		increase physical activity, improve mental health, and reduce the risk of
		chronic diseases such as diabetes and heart disease. Walking and biking
		trails, playgrounds, and open spaces make exercise more accessible to city
		residents.\ldots{} &
		Urban parks are an often-overlooked but vital component of a city's
		infrastructure, providing recreational space and a multitude of health
		benefits. Safe and accessible places to exercise help combat obesity and
		improve circulation, mood, and community well-being. Greenery also offers
		an oasis from urban stress and can reduce symptoms of anxiety and
		depression.\ldots{} \\
		\midrule
		010 &
		Explain why ocean currents matter for global climate. &
		Ocean currents play a crucial role in regulating global climate because
		they distribute heat around the planet. The oceans absorb atmospheric
		heat and circulate it to different regions, helping maintain global
		temperatures and shaping regional climate patterns. This redistribution
		also mitigates extreme temperature variations between regions.\ldots{} &
		Ocean currents transport heat from equatorial regions toward the poles,
		influencing global temperature and weather patterns. Deep thermohaline
		circulation and wind-driven surface currents redistribute heat, moisture,
		and nutrients across the planet. Surface currents such as the Gulf Stream
		therefore help regulate local and regional climates.\ldots{} \\
		\midrule
		013 &
		Explain how inflation affects household purchasing power. &
		Inflation reduces household purchasing power when prices rise faster than
		income: the same amount of money buys fewer goods and services.
		Households may shift spending toward necessities, reduce discretionary
		purchases, or draw down savings. Inflation can also erode the real value
		of savings and create uncertainty about future consumption.\ldots{} &
		Inflation is a sustained increase in the general price level of goods and
		services. As prices rise, the value of money decreases, so households can
		buy less with the same income and may need to adjust their consumption
		patterns and financial plans. Rising rent, food, and utility costs can
		require difficult budget tradeoffs.\ldots{} \\
		\midrule
		016 &
		Write an accessible explanation of quantum entanglement for high school
		students. &
		Quantum entanglement describes particles whose properties become linked
		so that the state of one cannot be described independently of the others,
		even when they are separated by large distances. It is one of the
		fundamental ideas of quantum mechanics, where properties are represented
		by probabilities rather than the definite values familiar from classical
		physics.\ldots{} &
		Quantum physics studies really small things like atoms and particles that
		do not follow the familiar rules of everyday objects. One of its unusual
		rules is called ``entanglement'': a special connection between two
		particles or systems. Measurements of the two can remain strongly
		correlated even when the particles are very far apart.\ldots{} \\
		\midrule
		019 &
		Write a short narrative about a teacher adapting to a classroom full of
		AI tools. &
		As I stepped into the classroom on my first day of teaching, I felt
		overwhelmed by the sea of new devices and screens. My students set up
		their workstations with ease, while I struggled to adapt to the school's
		new AI-powered learning ecosystem. After several frustrating lessons, I
		began experimenting with tutorials, virtual reality, and other creative
		tools.\ldots{} &
		Mrs.\ Patel's alarm blared at 6:00 AM, signaling the start of a new school
		year. As she poured coffee, her mind returned to the summer's whirlwind
		training sessions on incorporating AI tools. She welcomed innovation, but
		this year's emphasis made her anxious. She soon found that language
		analysis and adaptive-learning tools could complement, rather than
		replace, her teaching.\ldots{} \\
		\midrule
		023 &
		Write a policy memo about reducing energy waste in office buildings. &
		Subject: Reducing Energy Waste in Office Buildings. As part of our
		organization's commitment to sustainability, I propose a policy to reduce
		energy consumption, lower operating costs, and minimize environmental
		impact through efficient practices and technologies. The measures include
		efficient lighting, appliances, elevators, and HVAC systems.\ldots{} &
		Title: Enhancing Energy Efficiency in Office Buildings: A Policy
		Memorandum. The policy encourages office buildings to reduce energy waste
		and targets a reduction of at least 15\% within two years. Education and
		training programs promote employee engagement and behavioral
		change, while annual audits and real-time energy monitoring identify
		further opportunities for improvement.\ldots{} \\
		\midrule
		026 &
		Write a short story about a city that loses access to the internet for
		one day. &
		Title: The Day the Internet Died. I had just finished a particularly
		uninspiring day at work and was heading home to nurse my creative wounds.
		As an artist, I relied heavily on the internet for inspiration and sales.
		Without it, I felt like a fish out of water. Back at my studio, I realized
		that my computer was offline too, threatening both an exhibition deadline
		and my connection with other artists.\ldots{} &
		The sun rose over the concrete jungle, casting a warm glow over the
		towering skyscrapers of Nova City. It was just another ordinary day for
		the millions of residents who called the metropolis home. Then their
		phones stopped connecting, and Nova City realized it had lost access to
		the world wide web. Emails, bank accounts, and social media became
		unreachable as technicians rushed to diagnose the outage.\ldots{} \\
		\midrule
		029 &
		Discuss the risks of overreliance on automated decision systems. &
		A key concern is that automated systems may not always provide accurate
		or unbiased information. In healthcare they may recommend ineffective
		treatments; in finance they may act on incomplete data; and in education
		they may reproduce unfair grading biases. Opaque decisions can also make
		errors difficult to understand, challenge, or correct.\ldots{} &
		Overreliance on automated decision systems poses significant risks to
		individuals and society, especially in criminal justice, employment, and
		healthcare. Historical training data can perpetuate bias, while opaque
		models make accountability difficult. Programming errors and unintended
		consequences can further harm people in these high-stakes settings.\ldots{} \\
		\noalign{\smallskip} \hline \hline
	\end{tabular}
	\label{tab:generation-examples}
\end{table*}

\section{Related Work}
\label{sec:related-work}

\paragraph{Statistical and distribution-aware LLM watermarks.}
The green-list construction of Kirchenbauer et al.~\cite{kirchenbauer2023watermark}
made token-level statistical watermarking a practical provenance mechanism for
autoregressive LMs. Follow-up work studies its reliability under rewriting,
span mixing, and low-entropy prompts~\cite{kirchenbauer2024reliability}, and
develops distortion-free, unbiased, or optimized samplers to improve the
quality--detectability tradeoff~\cite{kuditipudi2024robust,hu2024unbiased,
christ2024undetectable,li2024statistical,giboulot2024watermax,
zhao2023provable}. Other work uses lexical redundancy or probability-aware
token selection to limit utility loss~\cite{watme2024,ren2024riw}, while
entropy-weighted detection explicitly addresses low-entropy
positions~\cite{lu2024entropy}. \scheme{} likewise operates at sampling time,
but its goal is not a secret-key hypothesis test: its detector scores public
token evidence against signed registry records.

\paragraph{Robust and semantic watermarks.}
Token-local schemes can lose synchronization after editing or paraphrasing.
Semantic-invariant and semantics-based methods therefore derive watermark
decisions from a semantic representation of the preceding text rather than
only a short token hash~\cite{liu2024sir,ren2024semamark}. Context-aware
balanced lists similarly aim to preserve semantic quality while increasing
robustness~\cite{guo2024context}; SemStamp moves the watermark unit to
sentence-level semantic regions~\cite{hou2024semstamp}; and PostMark shows a
post-hoc, black-box-compatible route to robust
watermarking~\cite{chang2024postmark}. Subsequent analysis also demonstrates
that purported paraphrase robustness must be evaluated against
reverse-engineering-aware attacks~\cite{rastogi2024paraphrasing}. These works
primarily improve the resilience of a keyed watermark signal. In contrast, our
local transcript distributes evidence for a registry-authorized payload; the
formal claim is about public evidence aggregation, not semantic invariance.

\paragraph{Payload-bearing and public verification schemes.}
Multi-bit methods allocate positions or layers to carry user-level information
and recover it under corruption~\cite{yoo2024multibit,feng2025bimark}. The
publicly-detectable construction of Fairoze et al.~\cite{fairoze2023publicly}
embeds a publicly verifiable signature with rejection sampling and error
correction. UPV uses separate generation and detection networks to target
unforgeable public verification~\cite{liu2024upv}, whereas PVMark makes a
private-key detector publicly auditable through zero-knowledge proofs
\cite{duan2026pvmark}. \scheme{} differs by authorizing a short payload in a
signed registry and representing it as many one-bit equation votes. Therefore,
verification need not exactly decode a long cryptographic string; it tests
whether noisy text supports one of the bounded, signed hypotheses. The
public-checking abstraction is motivated by the broader verifiable-random-
function paradigm~\cite{micali1999vrf}, but a concrete instantiation remains
future work.

\paragraph{Forgery, spoofing, and cryptographic foundations.}
Watermark removal is not the only security objective: color-aware token
substitutions can exploit a recovered watermark partition~\cite{wu2024bypassing},
while Bileve distinguishes source tracing from fine-grained integrity in the
presence of spoofing~\cite{zhou2024bileve}. More broadly, complexity-theoretic
steganography formalizes undetectability relative to a sampled cover
channel~\cite{hopper2002steganography}, and public-key steganography shows how
hidden communication can be established without a pre-shared secret
\cite{vonahn2004publickey}. \scheme{} adopts the public-verification motivation
but targets provenance evidence rather than covert communication: its
registry is explicit, and its security boundary separates issuer authorization
from edit-tolerant statistical detection.

\paragraph{Evaluation and robustness assessment.}
Recent benchmarks emphasize that watermarking must be assessed jointly for
text quality, detection length, and resistance to transformations rather than
by an isolated detector score~\cite{piet2024markmywords,tu2024waterbench}.
WaterPark further systematizes watermarker and attack design choices, exposing
their effect on adversarial robustness~\cite{liang2025waterpark}. Following
this perspective, our evaluation reports ordinary and wrong-context negative
controls together with deletion, substitution, truncation, copy-paste, and
suffix attacks. We additionally expose equation coverage, rank, registry
agreement, matching mass, and fallback behavior, because these diagnostics
explain whether a public-verification decision is supported by independent
authorized evidence.

\paragraph{Fountain-style evidence aggregation.}
LT codes and related fountain constructions recover data from redundant,
randomized equations rather than a fixed packet order~\cite{luby2002lt}.
\scheme{} borrows the one-bit equation viewpoint, but does not use it as a
standalone erasure-code decoder. Each equation is a noisy statistical vote
derived from a token, and the registry restricts the payload hypotheses before
the verifier scores their agreement. This distinction is what allows the
construction to tolerate fallback sampling and local text edits without
requiring exact recovery of an embedded bitstream.

\section{Conclusion} \label{sec}

We introduced \scheme{}, a public-key watermarking framework that separates
payload authorization from noisy textual evidence. A signed registry record
authorizes a short context-bound payload, generation distributes evidence
across token-level equations, and public verification aggregates those votes
instead of exactly decoding a long cryptographic string. In our $32$-bit
configuration, all unmodified outputs were detected, edited outputs retained
valid rates of $0.967$--$1.000$, and all evaluated plain and wrong-context
controls were rejected. The parameter sweeps favor short payloads, moderate
generation lengths, and local context binding.

An acceptance decision establishes issuer-authorized evidence, not
full-document integrity or token-by-token authorship. A complete deployment
therefore still requires a concrete unforgeable public-checking primitive and,
where stronger provenance is needed, complementary signatures or span-level
localization. Within this boundary, \scheme{} shows that offline public
verification can be combined with edit-tolerant statistical evidence through
signed registry payloads.

\bibliographystyle{plain}
\bibliography{references}

\balance
	
\end{document}